\numberwithin{equation}{section}
\newtheorem{definition}{Definition}[section]
\newtheorem{theorem}{Theorem}[section]
\newtheorem{lemma}{Lemma}[section]
\theoremstyle{definition}
\newtheorem{remark}{Remark}[section]
\newtheorem{corollary}{Corollary}[section]
\newcommand{\D}{\displaystyle}
\newcommand{\e}{\epsilon}
\newcommand{\mathd}{\mathrm{d}}
\title{On the Uniqueness of Sparse Time-Frequency Representation of Multiscale Data}
\author{Chunguang Liu \thanks{Department of Mathematics, Jinan University, Guangzhou, China, 510632.
{\it Email: tcgliu@jnu.edu.cn.}} \and Zuoqiang Shi \thanks{Mathematical Sciences Center, Tsinghua University, Beijing, China, 100084.
{\it Email: zqshi@math.tsinghua.edu.cn.}} \and Thomas Y. Hou \thanks{Applied and Comput. Math, MC 9-94, Caltech,
Pasadena, CA 91125. {\it Email: hou@cms.caltech.edu.}}}
\begin{document}

\maketitle

\abstract{In this paper, we analyze the uniqueness of the sparse time frequency decomposition and investigate the
efficiency of the nonlinear matching pursuit method. Under the assumption of scale separation, we show that the sparse
time frequency decomposition is unique up to an error that is determined by the scale separation property of the signal. We further show that the unique decomposition can be obtained approximately by the sparse time frequency decomposition using nonlinear matching pursuit.}

{\bf Keywords.} sparse time frequency decomposition; scale separation; nonlinear matching pursuit.

\section{Introduction}

Nowadays, data play more and more important role in our life.
It has become increasingly important to develop effective data analysis tools to extract useful information from massive amount of data.
Frequency is one of the most important features for oscillatory data. In many physical problems, frequencies encode important information
of the underlying physical mechanism. Many time-frequency analysis methods have been developed to extract information of frequencies and the corresponding amplitudes from the measurement of signals. These include the windowed Fourier transform, the wavelet transform
\cite{Daub92, Mallat09}, the Wigner-Ville distribution \cite{Flandrin99}, and the Empirical Mode Decomposition (EMD) method \cite{EMD02,EMD09}. 
Among these different time-frequency analysis methods, the EMD method provides an efficient adaptive method to extract frequency information from nonlinear and nonstationary data and it has been successfully used in 
many applications. However, due to its empirical nature, the EMD method still lacks a rigorous mathematical foundation. 
Recently, a number of attempts have been made to provide a mathematical foundation for this method,
 see e.g. the  synchrosqueezed wavelet transform \cite{DLW11}, the Empirical wavelet transform \cite{DZ13}, the variational mode decomposition \cite{Gilles13}.

In the last few years, inspired by the EMD method and compressive sensing \cite{CRT06,CT06,Dnh06},
 Hou and Shi proposed a novel time frequency analysis method based on the sparsest representation of multiscale data \cite{HS13}. 
In this method, the signal is decomposed into a finite number of intrinsic mode functions with a small residual: 
\begin{eqnarray}
\label{decomp-f}
  f(t)=\sum_{j=1}^M a_j(t)\cos\theta_j(t) + r(t),\quad t\in \mathbb{R} ,
\end{eqnarray}
where $a_j(t),\,\theta_j'(t)>0,\;j=1,\cdots,M$ and $r(t)$ is a small residual. We assume that
$a_j(t)$ and $\theta_j'$ are less oscillatory than $\cos\theta_j(t)$. What we mean by ``less oscillatory'' will be made precise below by the definition of scale separation property of the signal.
Borrowing the terminology from the EMD method, we call $a_j(t)\cos\theta_j(t)$ the Intrinsic Mode Function (IMF) \cite{EMD02}.
After the decomposition is obtained, the instantaneous frequencies $\omega_j(t)$ are defined as
\begin{eqnarray}
  \omega_j(t)=\theta_j'(t),
\end{eqnarray}
and the amplitude is given by $a_j(t)$.

One main difficulty in computing the decomposition \eqref{decomp-f} is that the decomposition is not unique.
To pick up the "best" decomposition among all feasible ones, Hou and Shi proposed to decompose the signal by looking for the sparsest decomposition
by solving the following nonlinear optimization problem:
 \begin{eqnarray}
\label{opt-l0-pre}
\quad && \begin{array}{rcc}\vspace{-2mm}
&\mbox{Minimize} &M\\ \vspace{2mm}
&{\scriptstyle (a_k)_{1\le k\le M}, (\theta_k)_{1\le k\le M}}&\\
&\mbox{Subject to:}&\D \|f-\sum_{k=1}^M a_k\cos\theta_k \|_{L^2} \le \epsilon,\quad\quad a_k\cos\theta_k\in \mathcal{D},
\end{array}
\end{eqnarray}
where $\epsilon$ is the noise level and $\mathcal{D}$ is the dictionary consist of all IMFs (see \cite{HS13} for its precise definition).

The idea of looking for the sparsest representation over the time frequency dictionary has been exploited extensively in the signal processing community, see, e.g. \cite{MZ93,CDS98}. Comparing with other existing methods, the novelty of the method proposed by Hou and Shi is that the time frequency dictionary being used is much larger. This method has the advantage of being fully adaptive to the signal.  


The optimization problem \eqref{opt-l0-pre} is nonlinear and nonconvex. It is challenging to solve this nonlinear optimization problem efficiently.
To overcome this difficulty, Hou and Shi proposed an efficient algorithm based on nonlinear matching pursuit and Fast Fourier transform to solve the above nonlinear optimization problem.
In a subsequent paper \cite{HST13-2}, the authors proved the convergence of their nonlinear matching pursuit algorithm for periodic data
that satisfy certain scale separation property. Further, they have demonstrated the effectiveness of this method by decomposing realistic signals arising from various applications. However, from the theoretical point of view, one important question remains open. That is the uniqueness of the solution of the optimization problem \eqref{opt-l0-pre}. This is precisely the main focus of this paper.

In this paper, we will show that under the assumption of scale separation, the solution of optimization problem \eqref{opt-l0-pre} is unique
up to an error determined by the scale separation property. 
First, we give a precise definition of scale separation of a signal as follows:
\begin{definition}
  [scale-separation]
\label{scale-seperation}
One function $f(t)=a(t)\cos\theta(t)$ is said to satisfy a
scale-separation property with a separation factor $\e >0$,
if $a(t)$ and $\theta(t)$ satisfy the following conditions:
\begin{eqnarray*}
 && a(t)\in C^1(\mathbb{R}),\; \theta\in C^2(\mathbb{R}),\quad
\inf_{t\in \mathbb{R}} \theta'(t)>0,
\\
&& \frac{\sup_{t\in \mathbb{R}}\theta'(t)}{\inf_{t\in \mathbb{R}}\theta'(t)}= M'<+\infty,
\quad \left|\frac{a'(t)}{\theta'(t)}\right|\le \e,\; \;
\left|\frac{\theta''(t)}{\left(\theta'(t)\right)^2}\right|\le \e,\quad \forall t\in
\mathbb{R}.
\end{eqnarray*}
\end{definition}
In the above definition, the first three assumptions are on the regularity of the envelope $a(t)$ and the instantaneous frequency $\theta'(t)$. This regularity is relatively mild and can be relaxed to a piecewise smooth function. The key assumptions on the scale separation property are the last two assumptions. The assumption $\left|\frac{a'(t)}{\theta'(t)}\right|\le \e$ quantifies what we mean by the envelope is less oscillatory than the normalized signal $\cos(\theta(t))$. The last assumption 
$\left|\frac{\theta''(t)}{\left(\theta'(t)\right)^2}\right|\le \e$
essentially says that 
the frequency oscillation is relatively weak compared with the square of the frequency itself. In the previous theoretical study of time-frequency analysis, a much stronger assumption on the frequency oscillation is made, i.e. $\left|\frac{\theta''}{\theta'} \right|\le \e$, see e.g. \cite{DLW11,CM15}. 
In the problems that we consider, the instantaneous frequency $\theta'(t)$ is typically quite large. Therefore, the assumption $\left|\frac{\theta''(t)}{\left(\theta'(t)\right)^2}\right|\le \e$ is
 much weaker than the typical assumption, $\left|\frac{\theta''}{\theta'} \right|\le \e$.

With the definition of scale separation, we construct the dictionary $\mathcal{D}_\e$ by putting all the functions satisfying 
scale separation together. 
\begin{align}\label{dict1}
 \mathcal{D}_\e:=\left\{a(\cdot)\cos\theta(\cdot):
(a, \theta)\in U_\e\right\},
\end{align}
where
\begin{align}\label{opt03}
 U_\e:=\left\{(a, \theta): a>0, \theta'>0; \; \left|\frac{a'}{\theta'}\right|\le\e,\;
\left|\frac{\theta''}{[\theta']^2}\right|\le\e,\;\frac{\sup_{t\in\mathbb{R}}\theta'(t)}{\inf_{t\in\mathbb{R}}\theta'(t)}
=M' <+\infty \right\}.
\end{align}
$\mathcal{D}_\e$ is the dictionary that we will use to represent the signal. 
For any given signal $f(t)$, we decompose $f$ over the dictionary $\mathcal{D}_\e$ by looking for the sparsest representation:
\begin{align}\label{P0}\tag{$P_0$}
\begin{array}{rcl}
 \mbox{Minimize} & &\quad\quad M \\
 \mbox{subject to} & & \D |f(t)-\sum_{k=1}^M a_k(t)\cos\theta_k(t)|\le \e_0, \\
&&a_k(t)\cos\theta_k(t)
\in \mathcal{D}_\e, \quad k=1,2,\cdots, M.
\end{array}
\end{align}
Here $\e_0$ is a given threshold of the accuracy of the decomposition. Typically, $\e_0$ is set according to the amplitude of noise.

To get the uniqueness, we need to assume that the signal $f(t)$ is well separated which is defined in Definition \ref{def:well-sep-intro}.
\begin{definition}
  [Well-separated signal]
\label{def:well-sep-intro}
A signal $f: \mathbb{R}\rightarrow \mathbb{R}$ is said to be
well-separated with separation factor $\e$ and frequency
ratio $d$ if it can be written as
\begin{eqnarray}
\label{opt01}
  f(t)=\sum_{k=1}^Ma_k(t)\cos\theta_k(t)+r(t)
\end{eqnarray}
where all $f_k(t)=a_k(t)\cos\theta_k(t)$ satisfies the
scale-separation property with separation factor $\e$, $r(t)=O(\e_0)$ and
their phase functions $\theta_k$ satisfy
\begin{eqnarray}
  \label{seperation-IMF-intro}
\theta_k'(t)\ge d \theta_{k-1}'(t),\quad \forall t\in \mathbb{R},
\end{eqnarray}
and $d>1$, $d-1=O(1)$.
\end{definition}
In the above definition, the key measurement of separation among different components is the value $d$, which measures the frequency gap among different components.  For a well separated signal, we can prove that the solution of the optimization problem \eqref{P0} is unique up to an error $O(\epsilon)$. The main result is summarized in Theorem \ref{thm:unique-intro}.
\begin{theorem}
  \label{thm:unique-intro}
  Let $f(t)$ be well separated with
separation factor $\e\ll 1$ and frequency ratio $d$ as defined in
Definition \ref{def:well-sep-intro}.
Then $\left(a_k,\theta_k\right)_{1\le k\le M}$ is an optimal solution of the optimization problem \eqref{P0} and it is unique up to the error $\e$, i.e.
if $\left(\tilde{a}_k,\tilde{\theta}_k\right)_{1\le k\le \tilde{M}}$ is another optimal solution of \eqref{P0}, then $\tilde{M}=M$ and
\begin{eqnarray}
\label{eqn:unique-error}
  |a_k(t)-\tilde{a}_k(t)|=O(\e),\quad \frac{|\theta_k(t)-\tilde{\theta}_k(t)|}{\theta'_k(t)}=O(\e), \quad \forall t,\quad k=1,\cdots,M .
\end{eqnarray}
\end{theorem}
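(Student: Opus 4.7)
The plan is to exploit the spectral separation of the IMFs---guaranteed by the frequency-ratio condition $\theta_k'\ge d\,\theta_{k-1}'$---together with the near-monochromaticity of each IMF---guaranteed by the scale-separation bounds $|a'/\theta'|,|\theta''/[\theta']^2|\le\e$---to convert the global minimization \eqref{P0} into a family of local, essentially one-component comparisons. Feasibility of $(a_k,\theta_k)_{1\le k\le M}$ is immediate from $r=O(\e_0)$, so the minimum $\tilde M\le M$. Both the reverse inequality $\tilde M\ge M$ and the componentwise estimate \eqref{eqn:unique-error} should fall out of the same localized frequency-matching argument applied to the difference of any two optimal decompositions.

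\textbf{Key lemmas.} First I would prove a local near-sinusoid lemma: for any $(a,\theta)\in U_\e$ and any reference time $t_0$, the IMF $a(t)\cos\theta(t)$ restricted to a window of length $L\sim 1/(\sqrt{\e}\,\theta'(t_0))$ around $t_0$ differs from the pure sinusoid $a(t_0)\cos\!\bigl(\theta'(t_0)(t-t_0)+\theta(t_0)\bigr)$ by $O(\sqrt{\e}\,a(t_0))$ in $L^\infty$. This follows by integrating the scale-separation bounds across the window to show that $a$ is essentially constant and $\theta$ is essentially linear. Second, I would establish a windowed orthogonality lemma: IMFs whose instantaneous frequencies at $t_0$ differ by a factor $\ge d>1$ have approximate cross-correlation $O(1/(dL\theta'(t_0)))$ on the window, so they are nearly orthogonal. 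Combining the two yields the main quantitative tool: if a sum $\sum_{k=1}^{N} a_k\cos\theta_k$ of IMFs with $\theta_k'\ge d\,\theta_{k-1}'$ is of size $O(\e_0)$ in $L^\infty$, then each $a_k(t_0)=O(\e_0+\sqrt{\e})$, obtained by testing against a localized wavepacket centered at frequency $\theta_k'(t_0)$.

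\textbf{Main argument and obstacle.} Given a second optimal decomposition $(\tilde a_j,\tilde\theta_j)_{1\le j\le \tilde M}$, I would form the difference and apply the windowed analysis at an arbitrary $t_0$. Since the difference is $O(\e_0)$ and each $\theta_k'(t_0)$ sits in its own dyadic frequency band $[\theta_k'(t_0)/\sqrt{d},\sqrt{d}\,\theta_k'(t_0)]$, projecting onto that band leaves essentially only $a_k(t_0)\cos(\cdot)$ from the first sum, and so must extract at least one matching component $\tilde a_{\sigma(k)}\cos\tilde\theta_{\sigma(k)}$ from the second sum with $\tilde\theta_{\sigma(k)}'(t_0)$ in the same band, whose amplitude and phase agree with $a_k(t_0)$ and $\theta_k(t_0)$ to order $\e$. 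This furnishes an injection $\sigma:\{1,\ldots,M\}\to\{1,\ldots,\tilde M\}$, hence $\tilde M\ge M$; combined with optimality, $\tilde M=M$ and $\sigma$ is a bijection, and the pointwise bounds at $t_0$ deliver \eqref{eqn:unique-error}. The main obstacle is calibrating $L$: resolving two frequencies that differ by a factor $d$ requires $L\gtrsim 1/\theta'$, while keeping the instantaneous frequency essentially constant on the window requires $L\lesssim 1/(\e\,\theta')$. The choice $L\sim 1/(\sqrt{\e}\,\theta'(t_0))$ meets both requirements, but since the relevant scale $\theta'$ varies with the component and with $t_0$, the delicate bookkeeping will be to propagate the error estimates uniformly through a dyadic family of windows adapted to each $\theta_k'$, and to verify that the boundary effects from the windowing do not degrade the final $O(\e)$ bound.
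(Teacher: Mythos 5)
Your overall strategy --- localize in time and frequency, use the scale-separation bounds to treat each IMF as nearly monochromatic on a window, and use the frequency gap $d$ to extract components one at a time --- is the same in spirit as the paper's, which implements the localization via a continuous wavelet transform whose Fourier transform $\widehat{\psi}$ is compactly supported in $[1-\Delta,1+\Delta]$. But there is a genuine quantitative gap in your version, and it sits exactly at the "calibration of $L$" that you flag as the main obstacle. With the hypothesis $|\theta''/[\theta']^2|\le\e$ (deliberately much weaker than $|\theta''/\theta'|\le\e$), the deviation of $\theta$ from its linearization over a window of length $L$ is of size $\e\,[\theta']^2L^2$, which at your choice $L\sim 1/(\sqrt{\e}\,\theta'(t_0))$ is $O(1)$, not $O(\sqrt{\e})$; so your near-sinusoid lemma is false as stated. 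Shrinking $L$ to restore smallness of the phase error degrades the frequency resolution and the windowed-orthogonality bound $O(1/(dL\theta'))$, and no choice of $L$ in this crude $L^\infty$-approximation scheme reaches the $O(\e)$ accuracy the theorem asserts --- you yourself only claim $a_k(t_0)=O(\e_0+\sqrt{\e})$. The paper gets the full $O(\e)$ by a different mechanism: Lemma \ref{lem:wft} estimates the wavelet coefficients by repeated integration by parts against the oscillation $e^{-i\theta(\tau)}$ (a nonstationary-phase gain of $1/\theta'$ per integration), so each error term is directly proportional to $\e$ without ever approximating $a$ by a constant or $\theta$ by a linear function in $L^\infty$. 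To repair your argument you would need to replace the pointwise approximation lemma by an oscillatory-integral estimate of this type.

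A second, smaller gap: you construct the injection $\sigma$ separately at each $t_0$ and then assert that the pointwise bounds "deliver \eqref{eqn:unique-error}", but this requires the matching $\sigma$ to be the \emph{same} bijection for all $t_0$; otherwise the components of the second decomposition could exchange roles as $t_0$ varies and the pointwise estimates would not assemble into estimates on fixed pairs $(a_k,\tilde a_k)$, $(\theta_k,\tilde\theta_k)$. The paper devotes a dedicated continuity/supremum argument (the constancy of $I^{-1}(k,\cdot)$) to rule this out, using the frequency gap $d(1-\Delta)>1+\Delta$; your proposal should include the analogous step. Finally, note that the theorem bounds the phase difference $|\theta_k-\tilde\theta_k|/\theta_k'$, not just the frequency difference, so the "phase agrees to order $\e$" claim needs to be extracted from the complex (not merely modulus) comparison of the localized projections.
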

This theorem is proved by carefully studying the wavelet transform of each IMF. 
The details of the proof can be found in Section 2.


We remark that there has been some very nice progress in developing
a mathematical framework for an EMD like method using synchrosqueezed
wavelet transforms \cite{DLW11} and windowed Fourier transform \cite{CM15}. 
 These methods
are based on continuous wavelet transform and windowed Fourier transform respectively and do not look for the sparsest decomposition directly. 
So the question of uniqueness is not the same as that we consider here in this paper.

The rest of the paper is organized as follows. The uniqueness of the sparse time-frequency decomposition is analyzed
in Section 2. In Section 3, we analyze the performance of an algorithm based on matching pursuit. Some concluding remarks are made in Section 4. We defer a few technical lemmas to the appendices.

\section{Uniqueness of \ref{P0} for well-separated signals}
\label{sec:unique}

In this section, we assume that the signal, $f(t)$, satisfies the scale-separation property with
separation factor $\e$ and frequency ratio $d$ as defined in
Definition \ref{def:well-sep-intro}.

For this kind of signals, the existence of the solution of \eqref{P0} is obvious. Since $f(t)$ already has a representation,
\begin{eqnarray}
  f(t)=\sum_{k=1}^Ma_k(t)\cos\theta_k(t)+r(t),
\end{eqnarray}
this gives a feasible decomposition. 
Each feasible decomposition gives a positive integer, by collecting all these positive integers together, we get a set $A$. Then we know that
$A$ is nonempty and has a lower bound. Let $M_0=\inf A$. Since $A$ consists of positive integers, then $M_0=\inf A$ can be achieved. Then
the existence of the solution of \eqref{P0} is proved.

But the uniqueness
is much more complicated. In the next subsection, we will prove that if $f$ 
is a well-separated signal with separation factor $\e$ and frequency
ratio $d>1$ and $\e\ll 1$, then
 the solution of
\eqref{P0} is unique up to $\e$ and $\e_0$.

To simplify the notation, in the rest of this paper, we assume $\e_0$ and $\e$ have the same order and denote both of them by $\e$.

%

We prove Theorem \ref{thm:unique-intro} by carefully studying the wavelet transform of each IMF. As we show in Lemma \ref{lem:wft}, the continuous wavelet transform (CWT) of each IMF is confined in a narrow band by properly choosing an appropriate wavelet function. Then the uniqueness can be obtained by comparing the continuous wavelet transform of different decompositions.

In order to complete the proof, first, we need to estimate the width of CWT of each IMF. This estimation is given in the following lemma.
\begin{lemma}
\label{lem:wft}
let $\psi$ is a wavelet function
such that
\begin{eqnarray*}
  I_1=\int_{\mathbb{R}} |\psi(\tau)|\mathd \tau<+\infty,\quad I_2=\int_{\mathbb{R}}|\tau\psi'(\tau)|\mathd \tau<+\infty,\quad I_3=\int_{\mathbb{R}}|\tau^2\psi''(\tau)|\mathd \tau<+\infty.
\end{eqnarray*} 
Suppose $(a,\theta)\in U_\e$, i.e.
\begin{eqnarray*}
  \left|\frac{a'}{\theta'}\right|\le\e,\quad
\left|\frac{\theta''}{[\theta']^2}\right|\le\e,\quad \frac{\sup_{t\in\mathbb{R}}\theta'(t)}{\inf_{t\in\mathbb{R}}\theta'(t)}= M'<+\infty
\end{eqnarray*}
then we have
  \begin{eqnarray*}
\mathcal{W}(ae^{-i\theta})(t,\omega)=  
\frac{1}{\sqrt{\omega}}\int_{\mathbb{R}} a(\tau)e^{-i\theta(\tau)}\psi\left(\frac{\tau-t}{\omega}\right)\mathd \tau
=\sqrt{\omega}a(t)e^{-i\theta(t)}\widehat{\psi}(\omega\theta'(t))
+C\sqrt{\omega}\,\e
\end{eqnarray*}
where $C=(A+4|a(t)|+1)I_1+\left[M'+(M'+1)|a(t)|\right]I_2+M'|a(t)|I_3$  and $A=\sup_{t\in \mathbb{R}}|a(t)|$.

\end{lemma}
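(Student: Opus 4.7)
The plan is to carry out two integrations by parts based on the identity
\begin{equation*}
e^{-i\theta(\tau)} \;=\; \frac{i}{\theta'(\tau)}\,\frac{d}{d\tau}\, e^{-i\theta(\tau)},
\end{equation*}
which shifts the fast oscillation onto $\psi$ and generates a factor of $1/\theta'$ each time; the decay of $\psi$, $\psi'$ assumed through finiteness of $I_{1},I_{2},I_{3}$ is enough to kill all the boundary terms. Applying this identity once to $\mathcal{W}(ae^{-i\theta})(t,\omega)$ yields two integrals: one against $\psi\bigl(\tfrac{\tau-t}{\omega}\bigr)$ with integrand $\tfrac{a'}{\theta'} - \tfrac{a\theta''}{(\theta')^{2}}$ (bounded termwise by $\e$ and $A\e$), and one against $\psi'\bigl(\tfrac{\tau-t}{\omega}\bigr)$ with integrand $\tfrac{a(\tau)}{\theta'(\tau)}$. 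Applying the same identity to $\widehat{\psi}(\omega\theta'(t))=\int e^{-i\omega\theta'(t)s}\psi(s)\,ds$ rewrites the main term as an integral with $\psi'\bigl(\tfrac{\tau-t}{\omega}\bigr)$ against $\tfrac{a(t)}{\theta'(t)}e^{-i\theta(t)-i\theta'(t)(\tau-t)}$. The first integral is bounded directly and contributes a term of order $\sqrt{\omega}(1+A)\e\, I_{1}$. Subtracting the main-term integral from the second leaves the residual error
\begin{equation*}
-\frac{i}{\omega\sqrt{\omega}}\int\Bigl[e^{-i\theta(\tau)}\tfrac{a(\tau)}{\theta'(\tau)} - e^{-i\theta(t) - i\theta'(t)(\tau-t)}\tfrac{a(t)}{\theta'(t)}\Bigr]\,\psi'\Bigl(\tfrac{\tau-t}{\omega}\Bigr)\,d\tau.
\end{equation*}

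I would then split this residual by adding and subtracting $\tfrac{a(t)}{\theta'(t)}e^{-i\theta(\tau)}$, obtaining an amplitude-difference piece $e^{-i\theta(\tau)}\bigl[\tfrac{a(\tau)}{\theta'(\tau)}-\tfrac{a(t)}{\theta'(t)}\bigr]$ and a phase-difference piece $\tfrac{a(t)}{\theta'(t)}\bigl[e^{-i\theta(\tau)}-e^{-i\theta(t)-i\theta'(t)(\tau-t)}\bigr]$. The amplitude-difference piece is controlled by the direct estimate $\bigl|(a/\theta')'\bigr|\le (1+A)\e$, giving $\bigl|a(\tau)/\theta'(\tau)-a(t)/\theta'(t)\bigr|\le (1+A)\e |\tau-t|$; after $s=(\tau-t)/\omega$ this contributes an $I_{2}$ term.

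For the phase-difference piece, set $\phi(\tau):=\theta(\tau)-\theta(t)-\theta'(t)(\tau-t)$, so that $\phi(t)=\phi'(t)=0$ and $\phi''=\theta''$. Factoring $e^{-i\theta(\tau)}-e^{-i\theta(t)-i\theta'(t)(\tau-t)} = e^{-i\theta(t)-i\theta'(t)(\tau-t)}\bigl[e^{-i\phi(\tau)}-1\bigr]$ and integrating by parts a second time against the slow phase $e^{-i\theta'(t)(\tau-t)}$ produces two sub-integrals: one containing $\phi'(\tau)\psi'\bigl(\tfrac{\tau-t}{\omega}\bigr)$ and one containing $\bigl[e^{-i\phi(\tau)}-1\bigr]\psi''\bigl(\tfrac{\tau-t}{\omega}\bigr)$, each multiplied by an accumulated prefactor involving $1/\theta'(t)^{k}$ and $1/\omega^{j}$. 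For the first I use the quadratic-in-$(\tau-t)$ bound $|\phi'(\tau)|=|\theta'(\tau)-\theta'(t)|\le |\tau-t|\,\|\theta''\|_{\infty}\le \e M'^{2}\theta'(t)^{2}|\tau-t|$, so the $\theta'(t)^{2}$ exactly cancels the $1/\theta'(t)^{2}$ in the prefactor; the $|\tau-t|$ then combines with $\psi'$ to give an $I_{2}$ contribution of size $\sqrt{\omega}|a(t)|\e M'^{2}$. For the second I use $|e^{-i\phi(\tau)}-1|\le|\phi(\tau)|\le \tfrac12\e M'^{2}\theta'(t)^{2}(\tau-t)^{2}$, so that $\theta'(t)^{2}(\tau-t)^{2}=\theta'(t)^{2}\omega^{2}s^{2}$ cancels the $1/(\omega^{2}\theta'(t)^{2})$ in the prefactor and the $s^{2}$ combines with $\psi''$ to yield an $I_{3}$ contribution of size $\sqrt{\omega}|a(t)|\e M'^{2}/2$. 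Collecting the four contributions gives the stated error bound $C\sqrt{\omega}\,\e$.

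\paragraph{Main obstacle.}
The principal subtlety is making the estimate uniform in the scale parameter $\omega\theta'(t)$, which ranges over $(0,\infty)$. The natural second-order Taylor estimates $|\phi|,|e^{-i\phi}-1|=O(\e M'^{2}\theta'(t)^{2}(\tau-t)^{2})$ carry a potentially large factor $\theta'(t)^{2}(\tau-t)^{2}=(\omega\theta'(t))^{2}s^{2}$; this is exactly canceled by the $1/(\omega\theta'(t))^{2}$ accumulated from two integrations by parts, which is why the correct scheme is to integrate by parts once on the first integral and exactly twice on the phase-difference sub-piece -- not once (insufficient cancellation) or three times (the $\psi'''$ weight does not appear in the hypotheses). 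An analogous balance decides when to use the Taylor bound on $\phi'$ (which yields the needed $\e$ but an $|\tau-t|$ weight) versus the trivial bound $|\phi'|\le (M'+1)\theta'(t)$ (uniform but carries no $\e$); matching the weight $|\tau-t|^{k}$ emerging from $\phi$ against the weight of $\psi^{(k)}$ in $I_{k+1}$ after substitution $s=(\tau-t)/\omega$ is the combinatorial heart of the proof.
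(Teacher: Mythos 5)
Your proposal is correct and follows essentially the same route as the paper: a non-stationary-phase integration by parts (once for the amplitude part, twice for the phase part $e^{-i\phi}-1$ with $\phi=\theta(\tau)-\theta(t)-\theta'(t)(\tau-t)$), matching the resulting weights $|\tau-t|^k$ against $I_{k+1}$; the paper merely splits into amplitude and phase pieces before integrating by parts rather than after, which is an equivalent reorganization. The one substantive difference is that you bound $|\phi'|$ and $|\phi|$ via $\|\theta''\|_{\infty}\le \e(\sup\theta')^2\le \e M'^2\theta'(t)^2$, which puts $M'^2$ into the $I_2$ and $I_3$ coefficients, whereas the paper integrates the hypothesis $|(1/\theta')'|\le\e$ to get the sharper $|\theta'(\tau)-\theta'(t)|\le\e\,\theta'(\tau)\theta'(t)|\tau-t|$ and hence the single factor $M'$ appearing in the stated constant $C$; since the lemma is only invoked through the qualitative estimate $O(\sqrt{\omega}\,\e)$, this discrepancy is harmless.
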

The proof of this lemma can be found in Appendix A.
\begin{remark}
One sufficient condition to make sure that $I_1,I_2,I_3<+\infty$ is that
the Fourier transform of the wavelet, $\widehat{\psi}\in C^4(\mathbb{R})$ and has compact support. In the proof of Theorem \ref{thm:unique-intro}, 
we use the wavelet whose Fourier transform is a fifth order B-spline function such that above sufficient condition is satisfied.
\end{remark}

\subsection{Proof of Theorem \ref{thm:unique-intro}}

Now, we are ready to give the proof of Theorem \ref{thm:unique-intro}.
\begin{proof} {\it of Theorem \ref{thm:unique-intro}}

The details of the proof may be a bit tedious but the idea is very clear. First, we assume there are two decompositions:
\begin{eqnarray}
  f(t)=\sum_{k=1}^Ma_k(t)\cos\theta_k(t)+O(\e)=\sum_{k=1}^{\tilde{M}}\tilde{a}_k(t)\cos\tilde{\theta}_k(t)+O(\e) .
\end{eqnarray}
Using Lemma \ref{lem:wft}, we have
\begin{eqnarray}
\label{eqn:wft-1}
  \omega^{-1/2}\mathcal{W}(f)(t,\omega)=\sum_{k=1}^Ma_k(t)e^{-i\theta_k(t)}\widehat{\psi}(\omega\theta'_k(t))+O(\e)=
\sum_{k=1}^{\tilde{M}}\tilde{a}_k(t)e^{-i\tilde{\theta}_k(t)}\widehat{\psi}(\omega\tilde{\theta}'_k(t))+O(\e).
\end{eqnarray}
By analyzing the support of $\mathcal{W}(f)(t,\omega)$, we can get $M=\tilde{M}$ and
\begin{eqnarray}
  a_k(t)e^{-i\theta_k(t)}\widehat{\psi}(\omega\theta'_k(t))=\tilde{a}_k(t)e^{-i\tilde{\theta}_k(t)}\widehat{\psi}(\omega\tilde{\theta}'_k(t))+O(\e),\quad
k=1,\cdots,M.
\end{eqnarray}
Using these equalities, we can show that \eqref{eqn:unique-error} is true.

First, we pick up a specific wavelet function $\psi$. We require that its Fourier transform $\widehat{\psi}\in C^4$ has support in
$[1-\Delta, 1+\Delta]$ with $0<\Delta < \frac{\sqrt{d}-1}{\sqrt{d}+1}$ and
$\widehat{\psi}(1)=1$ is the maximum of $|\widehat{\psi}|$. In this proof, we choose $\widehat{\psi}$ to be a fifth order B-spline function after proper scaling
and translation.

For any $\theta\in C^1$, define
  \begin{eqnarray*}
    U_\theta=\left\{(t,\omega)\in \mathbb{R}^2: |\widehat{\psi}(\omega \theta'(t))|> \e\right\},\quad
U_\theta(t)=\left\{\omega\in \mathbb{R}: |\widehat{\psi}(\omega \theta'(t))|> \e\right\}.
  \end{eqnarray*}

Now, fix a time $t=t_0\in [0,1]$, for any $l\in \{1,\cdots,M\}$, let $\omega_{l,0}=\frac{1}{\theta_l'(t_0)}$. Using \eqref{eqn:wft-1}, we have
\begin{eqnarray}
  \omega^{-1/2}  \mathcal{W}(f)(t_0,\omega_{l,0})&=&\sum_{k=1}^Ma_k(t_0)e^{-i\theta_k(t_0)}\widehat{\psi}(\theta'_k(t_0)/\theta'_l(t_0))+O(\e)\nonumber\\
&=&\sum_{k=1}^{\tilde{M}}\tilde{a}_k(t_0)e^{-i\tilde{\theta}_k(t_0)}\widehat{\psi}(\tilde{\theta}'_k(t_0)/\theta'_l(t_0))+O(\e).
\end{eqnarray}
Since $\theta'_k(t_0)/\theta'_l(t_0)\le 1/d<1-\Delta$ or $\theta'_k(t_0)/\theta'_l(t_0)\ge d > 1+\Delta$ for any $k\ne l$, in the
first summation of the above equation, we have only one term left,
\begin{eqnarray}
 a_l(t_0)e^{-i\theta_l(t_0)}\widehat{\psi}(1)=\sum_{k=1}^{\tilde{M}}\tilde{a}_k(t_0)e^{-i\tilde{\theta}_k(t_0)}\widehat{\psi}(\tilde{\theta}'_k(t_0)/\theta'_l(t_0))+O(\e).
\end{eqnarray}
Then, there exists at least a $I(l,t_0)\in \{1,\cdots,\tilde{M}\}$, such that
\begin{eqnarray}
\left|\widehat{\psi}(\tilde{\theta}'_{I(l,t_0)}(t_0)/\theta'_l(t_0))\right|>0,
\end{eqnarray}
which means that $1-\Delta< \tilde{\theta}'_{I(l,t_0)}(t_0)/\theta_l'(t_0)<1+\Delta$.
Using the assumption that $f$ is well-separated with frequency ratio $d$ and $0<\Delta < \frac{\sqrt{d}-1}{\sqrt{d}+1}$,
for any $k\ne l$, we obtain
\begin{eqnarray}
  \frac{\theta'_k(t_0)}{\theta_l'(t_0)}\ge d, \quad \text{or}\quad \frac{\theta'_k(t_0)}{\theta_l'(t_0)}\le \frac{1}{d} .
\end{eqnarray}
This gives that
\begin{eqnarray}
  \frac{\tilde{\theta}'_{I(l,t_0)}(t_0)}{\theta'_k(t_0)}=  \frac{\tilde{\theta}'_{I(l,t_0)}(t_0)}{\theta'_l(t_0)}  \cdot
\frac{\theta'_{l}(t_0)}{\theta'_k(t_0)} \ge d(1-\Delta)>1+\Delta ,
\end{eqnarray}
or
\begin{eqnarray}
  \frac{\tilde{\theta}'_{I(l,t_0)}(t_0)}{\theta'_k(t_0)}=  \frac{\tilde{\theta}'_{I(l,t_0)}(t_0)}{\theta'_l(t_0)}  \cdot
\frac{\theta'_{l}(t_0)}{\theta'_k(t_0)} \le \frac{1+\Delta}{d}<1-\Delta .
\end{eqnarray}
Then, for any $k\ne l$, we have
\begin{eqnarray}
\label{eqn:cover}
\left|\widehat{\psi}(\tilde{\theta}'_{I(l,t_0)}(t_0)/\theta'_k(t_0))\right|=0 .
\end{eqnarray}
This implies that $I(k,t_0)\ne I(l,t_0),\; k\ne l$. Then we get that
\begin{eqnarray}
  \tilde{M}\ge M.
\end{eqnarray}
Since $\left(\tilde{a}_k,\tilde{\theta}_k\right)_{1\le k\le \tilde{M}}$ is a solution of \eqref{P0}, we also have $\tilde{M}\le M$. This implies that
 \begin{eqnarray}
\label{eqn:sparsity}
  \tilde{M}= M,
\end{eqnarray}
and for any $t\in [0,1]$, $I(\cdot,t): \{1,\cdots,M\}\rightarrow \{1,\cdots,M\}$ is a one to one map. Then, we can define its inverse map
$I^{-1}(\cdot,t): \{1,\cdots,M\}\rightarrow \{1,\cdots,M\}$.

For any $k=1,\cdots, M$,  we study the function $I^{-1}(k,\cdot): [0,1]\rightarrow \{1,\cdots,M\}$.
Using the condition that $\frac{\tilde{\theta}''}{(\tilde{\theta}')^2}\le \e$ and the signal $f$ is well-separated, it is easy to see that
$I^{-1}(k,\cdot)$ is a constant over $[0,1]$, i.e.
\begin{eqnarray}
  I^{-1}(k,t)=I^{-1}(k,0),\quad \forall t\in [0,1],\; k=1,\cdots,M.
\end{eqnarray}
Otherwise, suppose there exists $t_0\in [0,1]$, such that $I^{-1}(k,t_0)\ne I^{-1}(k,0)$. Let $A=\{0\le t \le t_0: I^{-1}(k,t_0)= I^{-1}(k,0)\}$
and $\xi =\sup A$. Then for any $\eta>0$, there exist $t_1,t_2 \in [0,1]$, such that
\begin{eqnarray*}
  t_1<\xi<t_2,\quad |t_2-t_1|<\eta,\quad I^{-1}(k,0)= I^{-1}(k,t_1)\ne I^{-1}(k,t_2).
\end{eqnarray*}
Denote
\begin{eqnarray}
  I^{-1}(k,t_1)=k_1,\quad   I^{-1}(k,t_2)=k_2,\quad k_1\ne k_2 .
\end{eqnarray}
Then, by the definition of the map $I(k,t)$, we have
\begin{eqnarray*}
  1-\Delta<\frac{\tilde{\theta}'_{k}(t_1)}{\theta'_{k_1}(t_1)}<1+\Delta,\quad   1-\Delta<\frac{\tilde{\theta}'_{k}(t_2)}{\theta'_{k_2}(t_2)}<1+\Delta.
\end{eqnarray*}
Without loss of generality, we assume that $\theta'_{k_2}> \theta'_{k_1}$.
Then, we have
\begin{eqnarray*}
  \tilde{\theta}'_{k}(t_1)<(1+\Delta) \theta'_{k_1}(t_1),\quad \tilde{\theta}'_{k}(t_2)>(1-\Delta) \theta'_{k_2}(t_1) .
\end{eqnarray*}
Now, let $\eta\rightarrow 0$, have $t_1,t_2\rightarrow \xi$, which gives
\begin{eqnarray*}
  \tilde{\theta}'_{k}(\xi)\le(1+\Delta) \theta'_{k_1}(\xi),\quad \tilde{\theta}'_{k}(\xi)\ge(1-\Delta) \theta'_{k_2}(\xi).
\end{eqnarray*}
On the other hand, since $f$ is well-separated, we know that
\begin{eqnarray*}
\frac{\theta'_{k_1}(t)}{\theta'_{k_2}(t)}\le 1/d.
\end{eqnarray*}
Then, we have
\begin{eqnarray*}
  \tilde{\theta}'_{k}(\xi)\le(1+\Delta) \theta'_{k_1}(\xi),\quad \tilde{\theta}'_{k}(\xi)\ge d(1-\Delta) \theta'_{k_1}(\xi)>(1+\Delta)\theta'_{k_1}(\xi),
\end{eqnarray*}
which is a contradiction.
This means that $I^{-1}(k,\cdot)$ is a constant over $[0,1]$ and we can assume 
\begin{eqnarray}
  I^{-1}(k,t)=k,\quad \forall t\in [0,1],\; k=1,\cdots,M.
\end{eqnarray}

Now we have that
\begin{eqnarray}
  1-\Delta<\frac{\tilde{\theta}'_k(t)}{\theta'_k(t)}<1+\Delta, \quad \forall t\in [0,1],\;  k=1,\cdots,M.
\end{eqnarray}
Using the assumption that the signal $f$ is well-separated with ratio $d$ and the choice of $\psi$, we have that for any $k,\;l=1,\cdots,M,\; k\ne l$
\begin{eqnarray}
\label{eqn:nojoint-1}
\widehat{\psi}\left(\omega\theta'_l(t)\right)=\widehat{\psi}\left(\omega\tilde{\theta}'_l(t)\right)=0,\quad \forall (t,\omega)\in U_{\theta_k},\\
\label{eqn:nojoint-2}
\widehat{\psi}\left(\omega\theta'_l(t)\right)=\widehat{\psi}\left(\omega\tilde{\theta}'_l(t)\right)=0,\quad \forall (t,\omega)\in U_{\tilde{\theta}_k}.
\end{eqnarray}
On the other hand, using Lemma \ref{lem:wft} we have
\begin{eqnarray}
\label{eqn:wft}
    \omega^{-1/2}\mathcal{W}(f)(t,\omega)=\sum_{k=1}^Ma_k(t)e^{-i\theta_k(t)}\widehat{\psi}(\omega\theta'_k(t))+O(\e)=
\sum_{k=1}^M\tilde{a}_k(t)e^{-i\tilde{\theta}_k(t)}\widehat{\psi}(\omega\tilde{\theta}'_k(t))+O(\e).
\end{eqnarray}
Using the three relations \eqref{eqn:nojoint-1},\eqref{eqn:nojoint-2} and \eqref{eqn:wft}, we have
\begin{eqnarray}
  \left|a_k(t)e^{-i\theta_k(t)}\widehat{\psi}(\omega\theta'_k(t)) - \tilde{a}_k(t)e^{-i\tilde{\theta}_k(t)}
\widehat{\psi}(\omega\tilde{\theta}'_k(t))\right|=O(\e),\quad \forall (t,\omega)\in U_{\theta_k}\bigcup U_{\tilde{\theta}_k},
\end{eqnarray}
which implies that
\begin{eqnarray}
\label{eqn:id-wft}
  \left|a_k(t)\widehat{\psi}(\omega\theta'_k(t))\right| - \left|\tilde{a}_k(t)
 \widehat{\psi}(\omega\tilde{\theta}'_k(t))\right|=O(\e),\quad \forall (t,\omega)\in U_{\theta_k}\bigcup U_{\tilde{\theta}_k}.
\end{eqnarray}
Next, we will prove that $|a_k(t)-\tilde{a}_k(t)|=O(\e)$ and $|\theta'_k(t)-\tilde{\theta}'_k(t)|=O(\e)$ from \eqref{eqn:id-wft}.
First, we consider the envelopes $a_k$ and $\tilde{a}_k$.

If $a_k(t)>\tilde{a}_k(t)$, we choose $\omega=1/\theta'_k(t)$ to get
\begin{eqnarray}
  a_k(t)\left|\widehat{\psi}(1)\right| - \tilde{a}_k(t)\left|
\widehat{\psi}(\tilde{\theta}'_k(t)/\theta'_k(t))\right|=O(\e) ,
\end{eqnarray}
where we have used the fact that $\left|\widehat{\psi}(\xi )\right|$ achieves
its maximum at $\xi =1$.
Since $a_k(t)>\tilde{a}_k(t)$, we have
\begin{eqnarray}
0\le \left|\widehat{\psi}(1)\right|\left( a_k(t) - \tilde{a}_k(t)\right)\le
 a_k(t)\left|\widehat{\psi}(1)\right| - \tilde{a}_k(t)\left|
\widehat{\psi}(\tilde{\theta}'_k(t)/\theta'_k(t))\right|=O(\e) .
\end{eqnarray}
This proves that
\begin{eqnarray}
  a_k(t) - \tilde{a}_k(t)=O(\e) .
\end{eqnarray}
If $a_k(t)<\tilde{a}_k(t)$, we take $\omega=1/\tilde{\theta}'_k(t)$. 
By following a similar argument, we can prove
\begin{eqnarray}
  \tilde{a}_k(t)-a_k(t)=O(\e) .
\end{eqnarray}
Combining these two cases, we obtain
\begin{eqnarray}
  |a_k(t) - \tilde{a}_k(t)|=O(\e) .
\end{eqnarray}
Substituting the above relation to \eqref{eqn:id-wft}, we get
\begin{eqnarray}
\label{eqn:dt-diff}
  \left|\widehat{\psi}(\omega\theta'_k(t))\right| - \left|\widehat{\psi}(\omega\tilde{\theta}'_k(t))\right|=O(\e),
\quad \forall (t,\omega)\in U_{\theta_k}\bigcup U_{\tilde{\theta}_k} .
\end{eqnarray}
For any $t\in [0,1]$, let $\omega=(1-\Delta/2)/\theta'_k(t)$, then we have
\begin{eqnarray}
  \left|\widehat{\psi}(1-\Delta/2))\right| - \left|\widehat{\psi}\left[(1-\Delta/2)\tilde{\theta}'_k(t)/\theta'_k(t)\right]\right|=O(\e).
\end{eqnarray}
Since $\widehat{\psi}$ is a fifth order B-Spline function and $\e \ll 1$, it is easy to see that there exists a constant $C>0$, such that
\begin{eqnarray}
\label{eqn:dt-est1}
  \left|(1-\Delta/2)\left(\tilde{\theta}'_k(t)/\theta'_k(t)-1\right)\right|\le C\e .
\end{eqnarray}
Since $\widehat{\psi}(\omega)$ is a fifth order B-Spline function,  
$\widehat{\psi}(\omega)$ is symmetric with respect to $\omega=1$. Thus we have 
$\widehat{\psi}(1-\Delta/2)=\widehat{\psi}(1+\Delta/2)$. Then, there is also another possibility:
\begin{eqnarray}
\label{eqn:dt-est2}
  \left|(1-\Delta/2)\tilde{\theta}'_k(t)/\theta'_k(t)-(1+\Delta/2)\right|\le C\e .
\end{eqnarray}
If \eqref{eqn:dt-est1} holds, then we get
\begin{eqnarray}
 \frac{ |\theta'_k(t)-\tilde{\theta}'_k(t)|}{\theta'_k(t)}=O(\e) .
\end{eqnarray}
If \eqref{eqn:dt-est2} holds, then we have
\begin{eqnarray}
  \frac{\tilde{\theta}'_k(t)}{\theta'_k(t)}\ge \frac{1+\Delta/2-C\e}{1-\Delta/2}\ge 1+\Delta ,
\end{eqnarray}
where we have used the assumption that $\e\ll 1$.

Then, let $\omega=1/\theta'_k(t)$ in \eqref{eqn:dt-diff}, we have
\begin{eqnarray}
1=  \left|\widehat{\psi}(1)\right| - \left|\widehat{\psi}\left(\tilde{\theta}'_k(t)/\theta'_k(t)\right)\right|=O(\e) .
\end{eqnarray}
This argument shows that \eqref{eqn:dt-est2} cannot be true. This completes the proof.
\end{proof}

\subsection{Discussion on Signals with close frequencies }

In this section, we will give a brief discussion on the signal with close frequencies, i.e. the frequency ratio $d\rightarrow 1$
in the definition of well-separated signal.

In the proof of Theorem \ref{thm:unique-intro}, the frequency ratio $d$ seems to be arbitrary, as long as it is larger than 1. Actually,
the distance between $d$ and 1 can not be too small. The gap is determined by the separation factor $\e$. First, it is easy to show that
the integrals $I_1, I_2, I_3$ in Lemma \ref{lem:wft} satisfy
\begin{eqnarray}
  I_1=O(\Delta^{-1}), \quad  I_2=O(\Delta^{-2}), \quad I_3=O(\Delta^{-3}).
\end{eqnarray}
When $d$ approaches 1, $\Delta=\frac{\sqrt{d}-1}{\sqrt{d}+1}$ becomes smaller, then $I_1, I_2, I_3$ become larger. When $I_1, I_2, I_3$ is as large as
$1/\e$, Lemma \ref{lem:wft} breaks down, which in turn leads to the break down of the proof of Theorem \ref{thm:unique-intro}.

On the other hand, the frequency ratio is defined pointwisely. As long as there exists a point, such that the frequency ratio at this point is away from 1
comparing with $\e$, the argument before \eqref{eqn:sparsity} of the proof of Theorem \ref{thm:unique-intro} still applies, which means that the decomposition given
in \eqref{opt01} is still an optimal solution of the optimization problem \eqref{P0}. But in this case, the uniqueness is not guaranteed. To illustrate this point, we construct
an example such that the solution of \eqref{P0} is not unique.
\begin{eqnarray}
\label{eqn:ex1}
  f(t)=\cos\theta_1(t)+\cos\theta_2(t),\quad t\in [0,1],
\end{eqnarray}
\begin{eqnarray}
  \theta_1(t)=6\pi k t +k\pi,\quad \theta_2(t)=8\pi kt+k\sin 2\pi t,
\end{eqnarray}
where $k$ is a positive integer. We can choose $k$ large enough, such that $\cos\theta_1,\cos\theta_2\in U_\e$.
At $t=0$, $\theta_2'(0)/\theta_1'(0)=4/3$. From the discussion above, we know that \eqref{eqn:ex1} gives a solution of \eqref{P0}.
On the other hand, it is easy to check that the following decomposition is also a solution of \eqref{P0}.
\begin{eqnarray}
\label{eqn:ex2}
  f(t)=\cos\phi_1(t)+\cos\phi_2(t),\quad t\in [0,1],
\end{eqnarray}
\begin{eqnarray*}
  \phi_1(t)=\left\{\begin{array}{cc} 6\pi k t +k\pi,& t\in [0,1/2],\\
8\pi kt+k\sin 2\pi t,& t\in (1/2,1],\end{array}
\right.
\quad \phi_2(t)=\left\{\begin{array}{cc} 8\pi kt+k\sin 2\pi t,& t\in [0,1/2],\\
6\pi k t +k\pi,& t\in (1/2,1].\end{array}
\right.
\end{eqnarray*}
This example shows that when the frequencies intersect, the solution of \eqref{P0} may not be unique. In this case, we need to
impose some extra constraints to obtain uniqueness of the decomposition. One natural idea is to pick up the solution according to the regularity of
$a(t)$ and $\theta'(t)$. The solution with smoother amplitude and frequency is favorable. One method based on this idea is proposed in \cite{HS14}
to decompose signals that do not have well-separated IMFs. In that method, the regularity is related with the sparsity over Fourier (or Wavelet) dictionary.
In the above example, this method would prefer the decomposition \eqref{eqn:ex1}, since in this decomposition, both of the amplitude and the
frequencies are very sparse over the Fourier dictionary.

\section{Optimal Solutions to the $P_2$ Problem}

Using the result in the previous section, we know that for signals that are well-separated, the solution of
\eqref{P0} is unique up to the separation factor $\e$. One natural question is that how to find this unique solution.

Based on matching pursuit, we propose the following algorithm to solve \eqref{P0} approximately.
\begin{itemize}
\item $r_0=f,\quad k=1$.
\item[Step 1:] Solve the following nonlinear least-square problem:
\begin{eqnarray}
\label{opt-greedy}
\begin{array}{rcl}\vspace{-2mm}
 (a_k,\theta_k)\in &\mbox{Argmin}&
\|r_{k-1}-a\cos\theta\|_{l^2}^2\\
&\scriptstyle a,\theta& \\
&\mbox{Subject to:}& a,\theta\in U_\e.
\end{array}
\end{eqnarray}
\item[Step 2:] Update the residual
\begin{eqnarray}
r_{k}=f-\sum_{j=1}^{k}a_{j}\cos\theta_{j}.
\end{eqnarray}
\item[Step 3:] If $\|r_{k}\|_{l^2}<\epsilon_0$, stop. Otherwise, set $k=k+1$ and go to Step 1.
\end{itemize}

In the above algorithm based on matching pursuit, each IMF is given by solving the following nonlinear least-square problem:
\begin{align}\label{P2}\tag{$P_2$}
\begin{array}{rcl}
 \mbox{Minimize} & & p(a, \theta):=\|f(t)-a(t)\cos\theta(t)\|_{L^2}^2 \;. \\
 \mbox{subject to} & & (a,\theta)\in U_\e.
\end{array}
\end{align}
We would like to know that under what conditions on each IMF $a_k(t)\cos \theta_k(t)$ of $f(t)$ the above nonlinear least-square problem could provide
a local (approximate) optimizer to the $P_0$ problem.

In the computations, we always deal with signals with finite time span. Without loss of generality, 
in this section we assume the time span of the signal is $[0,1]$. For the signal with finite length, it is unavoidable to introduce larger errors near the end points of the time interval. This is also known as the ``end effect''. To simplify the analysis, we assume that the signal is periodic over $[0,1]$. For those signals that are
 not periodic, we first 
multiply a cutoff function to make the signal vanish near the boundary and then treat it as a periodic signal. This means that 
the analysis in this section is valid only in the interior region away from the boundary.

For a periodic signal, we prove that each IMF is a local minimizer of \eqref{P2} as long as the signal satisfies some 
assumptions. This result is stated in Theorem \ref{thm:opt-l2}.  
\begin{theorem}\label{thm:opt-l2}
Let $f(t)$ be a function satisfying the scale-separation property with
separation factor $\e$ and frequency ratio $d$ as defined in
Definition \ref{def:well-sep-intro}:
  \begin{eqnarray*}
    f(t)=\sum_{k=1}^Ma_k(t)\cos\theta_k(t)+r(t), \quad a_k\cos\theta_k\in U_\e,\quad a_k=O(1),\; r=O(\e).
  \end{eqnarray*}
Suppose there exists $\alpha\in [1,d)$ and $l \in \{1,\cdots,M\}$ such that
\begin{align}\label{opt33}
 \alpha^{-1}\theta_{l}'(t)\leq \theta'(t)\leq \alpha\, \theta_{l}'(t),\qquad \forall t\in [0,1].
\end{align}
If
\begin{align}\label{opt34}
 p(a,\theta)\leq p(a_l, \theta_l),
\end{align}
where $p(a,\theta)$ is given in \eqref{P2},
then we have
\begin{align}\label{opt35}
 \frac{\|a\cos\theta-a_l\cos \theta_l\|_{L^2}}{\|a_l\cos \theta_l\|_{L^2}}=O(\sqrt{\e}) .
\end{align}
\end{theorem}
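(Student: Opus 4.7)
The plan is to exploit the near-orthogonality of well-separated scale-separated components in $L^2([0,1])$ and convert the energy inequality $p(a,\theta)\le p(a_l,\theta_l)$ into direct control of $\|a\cos\theta-a_l\cos\theta_l\|_{L^2}$. Set $h:=a_l\cos\theta_l-a\cos\theta$ and $g:=\sum_{k\ne l}a_k\cos\theta_k+r$, so that $f-a\cos\theta=h+g$ and $f-a_l\cos\theta_l=g$. Expanding the squared norms,
\begin{eqnarray*}
 p(a,\theta)-p(a_l,\theta_l)=\|h\|_{L^2}^2+2\langle h,g\rangle_{L^2},
\end{eqnarray*}
so hypothesis \eqref{opt34} yields $\|h\|_{L^2}^2\le 2|\langle h,g\rangle_{L^2}|$. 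The residual part satisfies $|\langle h,r\rangle|\le\|h\|_{L^2}\cdot O(\e)$ by Cauchy--Schwarz, so everything reduces to estimating the cross-IMF inner products $\langle h,a_k\cos\theta_k\rangle$ for $k\ne l$.

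Each such inner product expands (via the definition of $h$) into integrals of the form $\int_0^1\alpha_1(t)\alpha_2(t)\cos\phi_1(t)\cos\phi_2(t)\,dt$, where $(\alpha_1,\phi_1),(\alpha_2,\phi_2)\in U_\e$ and $\{\phi_1,\phi_2\}$ pairs $\theta_l$ or $\theta$ with $\theta_k$. I would use the product-to-sum identity and then integrate by parts via $\cos(\phi_1\pm\phi_2)\,dt=d\bigl[\sin(\phi_1\pm\phi_2)\bigr]/(\phi_1'\pm\phi_2')$. Periodicity over $[0,1]$ kills the boundary terms, leaving integrals whose integrands are, pointwise, the derivatives
\begin{eqnarray*}
 \frac{d}{dt}\!\left(\frac{\alpha_1\alpha_2}{\phi_1'\pm\phi_2'}\right)=\frac{\alpha_1'\alpha_2+\alpha_1\alpha_2'}{\phi_1'\pm\phi_2'}-\frac{\alpha_1\alpha_2(\phi_1''\pm\phi_2'')}{(\phi_1'\pm\phi_2')^2}.
\end{eqnarray*}
The constraint $\alpha<d$ together with $\theta_k'/\theta_l'\ge d$ or $\le 1/d$ forces $|\phi_1'\pm\phi_2'|$ to be comparable to $\max(\phi_1',\phi_2')$, and the scale-separation bounds $|\alpha_i'/\phi_i'|\le\e$, $|\phi_i''/(\phi_i')^2|\le\e$ then make every term on the right $O(\e)$ uniformly in $t$. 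Hence each $\langle h,a_k\cos\theta_k\rangle=O(\e)$, and summing over the finitely many $k\ne l$ leaves the total still $O(\e)$.

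Plugging back gives the scalar inequality $\|h\|_{L^2}^2\le C_1\e\,\|h\|_{L^2}+C_2\e$, which by the quadratic formula forces $\|h\|_{L^2}=O(\sqrt{\e})$. To conclude, the same integration-by-parts trick applied to the $\cos(2\theta_l)$ term in $(a_l\cos\theta_l)^2$ shows $\|a_l\cos\theta_l\|_{L^2}^2=\tfrac12\|a_l\|_{L^2}^2+O(\e)$, which is bounded below by a positive constant independent of $\e$, so dividing yields \eqref{opt35}. The main obstacle is the oscillatory-integral bookkeeping in the middle step: one has to verify that the frequency-gap assumption $\alpha<d$ does in fact propagate to a uniform lower bound $|\phi_1'\pm\phi_2'|\gtrsim\max(\phi_1',\phi_2')$ for every pair that appears, and that the resulting integrand remains uniformly $O(\e)$ in $t$. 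This in turn relies on an a priori pointwise bound on the candidate amplitude $a$, which should be extractable from the feasibility constraint $p(a,\theta)\le p(a_l,\theta_l)$ combined with the $U_\e$-regularity of $a$.
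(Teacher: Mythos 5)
Your overall strategy is exactly the paper's: expand $p(a,\theta)-p(a_l,\theta_l)$ into $\|h\|_{L^2}^2+2\langle h,g\rangle$, reduce the cross term to near-orthogonality estimates between scale-separated IMFs, obtain a quadratic inequality $\|h\|^2\le C_1\e\|h\|+C_2\e$, and invoke $\|a_l\cos\theta_l\|^2=\tfrac12\|a_l\|^2+O(\e)$ to normalize. Your oscillatory-integral step (product-to-sum, then integrating $\cos(\phi_1\pm\phi_2)$ against the phase) is precisely the content of the paper's Lemma \ref{lm02}, proved there by the change of variables $s=\bar\theta\pm\theta$ followed by Lemma \ref{lm01}; the two are the same mechanism, and your observation that $\alpha<d$ forces $|\phi_1'\pm\phi_2'|\gtrsim\max(\phi_1',\phi_2')$ is exactly what \eqref{opt39a}--\eqref{opt40a} record.

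The one place your version does not close as written is the step you flag yourself: your unnormalized claim $\langle h,a_k\cos\theta_k\rangle=O(\e)$ needs $\sup_t|a(t)|=O(1)$, because after integration by parts the terms $a\,a_k'/(\phi_1'\pm\phi_2')$ and $a\,a_k\,\phi''/(\phi_1'\pm\phi_2')^2$ carry a bare factor of $a$. Feasibility plus $U_\e$ only gives you $\|a\cos\theta\|_{L^2}=O(1)$, hence $\|a\|_{L^2}=O(1)$; you cannot upgrade this to a pointwise bound, since $|a'|\le\e\theta'$ only controls the variation of $a$ over $[0,1]$ by $\e\sup\theta'$, which the hypotheses do not force to be $O(1)$. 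The paper sidesteps this entirely by estimating the \emph{normalized} coherence $\mu_{k,l,\alpha}=|\langle a\cos\theta,a_k\cos\theta_k\rangle|/(\|a\cos\theta\|\,\|a_k\cos\theta_k\|)=O(\e)$ (the amplitudes cancel inside the logarithmic derivative $g'/g$ in Lemma \ref{lm01}), and then writing $\|a\cos\theta\|\le\|a\cos\theta-a_l\cos\theta_l\|+\|a_l\cos\theta_l\|$, which merely contributes an extra $O(\e)\|h\|$ term to the quadratic inequality \eqref{opt42}. Replace your pointwise bound on $a$ with this normalization and your argument matches the paper's in full.
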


\begin{proof} 
First we know
\begin{align}\label{opt37}
\begin{split}
0&\geq p(a, \theta)-p(a_l, \theta_l) \\
&=\|f(t)-a(t)\cos\theta(t)\|_{L^2}^2-\|f(t)-a_l(t)\cos\theta_l(t)\|_{L^2}^2   \\
&=\left\|\sum_{k\neq l}a_{k}\cos\theta_{k}+r(t)+a_l(t)\cos\theta_l(t)-a(t)\cos\theta(t)\right\|_{L^2}^2-\left\|\sum_{k\neq l}a_{k}\cos\theta_{k}+r(t)\right\|_{L^2}^2 \\
&=\|a_l\cos\theta_l-a\cos\theta\|_{L^2}^2
+2\langle a_l\cos\theta_l-a\cos\theta, \sum_{k\neq l}a_{k}\cos\theta_{k}+r(t)\rangle,
\end{split}
\end{align}
where the first equality follows from the definition of $p(a,\theta)$ in \eqref{P2}.
In the rest of the proof, we try to control the second term of the above inequality.

It is easy to verify that
\begin{align}\label{opt38}
\begin{split}
\left| \langle a_l\cos\theta_l, \sum_{k\neq l}a_{k}\cos\theta_{k}\rangle \right|
&\leq \left[ \sum_{k\neq l} \mu_{k,l}  \|a_l\cos\theta_l\|_{L^2}\|a_k\cos\theta_k\|_{L^2} \right] \\
&=\delta_1 \|a_l\cos\theta_l\|^2,
\end{split}
\end{align}
where
\begin{align}\label{opt36a}
\mu_{k,l}=\frac{\left| \langle a_l\cos\theta_l, a_{k}\cos\theta_{k}\rangle \right|}{\|a_l\cos\theta_l\|_{L^2}\|a_k\cos\theta_k\|_{L^2}},\quad
\delta_1= \sum_{k\neq l} \mu_{k,l}\frac{\|a_k\cos \theta_k\|_{L^2}}{\|a_l\cos \theta_l\|_{L^2}} .
\end{align}
Similarly, we have
\begin{align}\label{opt41}
\begin{split}
&\left| \langle a\cos\theta, \sum_{k\neq l}a_{k}\cos\theta_{k}+r(t)\rangle \right| \\
< &\delta_2 \|a\cos\theta\|_{L^2}  \|a_l\cos\theta_l\|_{L^2} \\
\leq &\delta_2 \|a\cos\theta- a_l\cos\theta_l\|_{L^2}\cdot \|a_l\cos\theta_l\|_{L^2}
+\delta_2 \|a_l\cos\theta_l\|_{L^2}^2 ,
\end{split}
\end{align}
with
\begin{align}\label{opt36b}
\begin{split}
 &\delta_2= \sum_{k\neq l} \mu_{k,l,\alpha}\frac{\|a_k\cos \theta_k\|_{L^2}}{\|a_l\cos \theta_l\|_{L^2}}
 +\frac{\|r(t)\|_{L^2}}{\|a_l\cos \theta_l\|_{L^2}},\quad \\
&\mu_{k,l,\alpha}=\frac{\left| \langle a\cos\theta, a_{k}\cos\theta_{k}\rangle \right|}{\|a\cos\theta\|_{L^2}\|a_k\cos\theta_k\|_{L^2}} .
\end{split}
\end{align}
Thus it follows from \eqref{opt37}, \eqref{opt38} and  \eqref{opt41}
that
\begin{align}\label{opt42}
0> \|a_l\cos\theta_l-a\cos\theta\|_{L^2}^2 - 2\delta_2 \|a\cos\theta
- a_l\cos\theta_l\|_{L^2}\cdot \|a_l\cos\theta_l\|_{L^2}
-2(\delta_1+\delta_2)\|a_l\cos\theta_l\|_{L^2}^2,
\end{align}
which implies
\begin{align}
\label{opt35a}
 \frac{\|a\cos\theta-a_l\cos \theta_l\|_{L^2}}{\|a_l\cos \theta_l\|_{L^2}}
\leq \delta_2+\sqrt{\delta_2^2+2(\delta_1+\delta_2)}.
\end{align}
Here, $\mu_{k,l}$ and $\mu_{k,l,\alpha}$ are just the coherences between $a_l\cos\theta_l$, $a\cos\theta$ and $a_k\cos\theta_k$. Under the assumption of scale separation and the assumption that different IMFs are well separated, the behavior of $a_l\cos\theta_l$, $a\cos\theta$ and $a_k\cos\theta_k$ are close to that of the standard Fourier basis.
Then, it is natural to expect that the coherences, $\mu_{k,l}$ and $\mu_{k,l,\alpha}$, are small. Actually, the smallness of $\mu_{k,l}$ and $\mu_{k,l,\alpha}$
is given in Corollary \ref{crl01}. In particular, the estimate \eqref{opt43} from Corollary \ref{crl01} shows that for all $k\neq l$ we have
\begin{align}\label{opt51}
\begin{split}
\mu_{k,l}=O(\e),\qquad
\mu_{k,l,\alpha}=O(\e).
\end{split}
\end{align}
Together with the assumption that $r(t)=O(\e)$,  we get
\begin{align}\label{opt52}
\delta_1=O(\e),\qquad \delta_2=O(\e).
\end{align}
It follows from \eqref{opt35a} that 
\begin{align}\label{opt54}
 \frac{\|a\cos\theta-a_l\cos \theta_l\|_{L^2}}{\|a_l\cos \theta_l\|_{L^2}}= O(\sqrt{\e}).
\end{align}
This completes the proof.

\end{proof}

In the proof of the above theorem, we have used the estimate \eqref{opt51} for $\mu_{k,l}$ and $\mu_{k,l,\alpha}$. This estimate can be derived by the following lemma:
\begin{lemma}\label{lm02}
Let $(a, \theta)\in U_\e$ be such that $a\cos\theta$ has period $1$. Then we have
\begin{align}\label{opt07}
 \left(\frac{1}{2} -3\e \right) \|a(t)\|_{L^2}^2\le \|a(t)\cos\theta(t)\|_{L^2}^2\le \left(\frac{1}{2} +3\e \right) \|a(t)\|_{L^2}^2.
\end{align}
Furthermore, if there is another pair
$(\bar{a}, \bar{\theta})\in U_\e$ being periodic over $[0,1]$ such that
\begin{align}\label{opt09}
\beta:=\min_{t\in [0,1]} \frac{ \bar{\theta}'(t)}{ \theta'(t)} >1,
\end{align}
then we have
\begin{align}\label{opt08}
\begin{split}
\left| \langle a\cos\theta, \bar{a}\cos \bar{\theta}\rangle\right|
<4\e\left(1+\frac1{(1-\beta^{-1})^2}\right)\int_{0}^{1} a(t)\bar{a}(t)dt,
\end{split}
\end{align}
\end{lemma}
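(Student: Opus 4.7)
The plan is to exploit the fact that $a$ and $\bar{a}$ are slowly varying compared to the oscillatory factors $\cos\theta$ and $\cos\bar{\theta}$. I would use elementary trigonometric identities to split each integral into a nonoscillatory ``mean'' piece and an oscillatory remainder, and then integrate by parts in the remainder; the scale-separation bounds $|a'/\theta'|\le\e$, $|\bar{a}'/\bar{\theta}'|\le\e$, $|\theta''/(\theta')^2|\le\e$, $|\bar{\theta}''/(\bar{\theta}')^2|\le\e$ are exactly what is needed so that each integration by parts produces a factor of~$\e$.

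For \eqref{opt07}, I would start from $a^2\cos^2\theta=\tfrac12 a^2+\tfrac12 a^2\cos 2\theta$, so that proving the inequality reduces to showing $|\int_0^1 a^2\cos 2\theta\,\mathd t|\le 6\e\,\|a\|_{L^2}^2$. Writing $\cos 2\theta=(\sin 2\theta)'/(2\theta')$ and integrating by parts gives
\begin{equation*}
\int_0^1 a^2\cos 2\theta\,\mathd t=\left[\frac{a^2\sin 2\theta}{2\theta'}\right]_0^1 - \int_0^1\sin 2\theta\left(\frac{aa'}{\theta'}-\frac{a^2\theta''}{2(\theta')^2}\right)\mathd t.
\end{equation*}
The periodicity hypothesis on $a\cos\theta$ handles the boundary term, while the two pieces of the bulk integrand are bounded pointwise by $\e|a|$ and $\tfrac{\e}{2}a^2$ respectively, yielding the stated $3\e$ constant after estimating $\|a\|_{L^1([0,1])}\le\|a\|_{L^2}$.

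For \eqref{opt08}, I would invoke the product-to-sum identity
\begin{equation*}
a\bar{a}\cos\theta\cos\bar{\theta}=\tfrac12 a\bar{a}\cos(\bar{\theta}-\theta)+\tfrac12 a\bar{a}\cos(\bar{\theta}+\theta),
\end{equation*}
and treat the two phases $\Phi_\pm=\bar{\theta}\pm\theta$ separately by the same integration-by-parts trick, now against $\Phi_\pm$. The key bookkeeping is that $\Phi_+'\ge\bar{\theta}'$ whereas $\Phi_-'\ge(1-\beta^{-1})\bar{\theta}'$ by \eqref{opt09}, so the terms $(a\bar{a})'/\Phi_\pm'$ and $a\bar{a}\Phi_\pm''/(\Phi_\pm')^2$ are controlled by $\e\,a\bar{a}$ times, respectively, an $O(1)$ constant for $\Phi_+$ and a $(1-\beta^{-1})^{-2}$ constant for $\Phi_-$ (the worst factor coming from $\bar{\theta}''/(\Phi_-')^2\le\e/(1-\beta^{-1})^2$). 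Keeping everything pointwise in $a\bar{a}>0$ (rather than passing through Cauchy--Schwarz) gives the bound in the form $\int_0^1 a\bar{a}\,\mathd t$ required by the lemma.

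The main obstacle is the bookkeeping: one must verify that the boundary contributions from each integration by parts are absorbed cleanly under the periodicity hypothesis, and that the singular factor $(1-\beta^{-1})^{-2}$ arising from the $\Phi_-$ phase dominates the milder $(1-\beta^{-1})^{-1}$ factors coming from $\theta'/\Phi_-'$ and $\bar{\theta}'/\Phi_-'$, so that they all collapse into the single combination $1+(1-\beta^{-1})^{-2}$. Once the constants are tracked carefully, both inequalities follow from a single integration by parts each.
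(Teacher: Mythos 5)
Your route is genuinely different in mechanism from the paper's. The paper changes variables to the phase ($s=2\theta(t)$, resp.\ $s=\bar{\theta}\pm\theta$), reduces everything to $\int g(s)\cos s\,\mathd s$ with $|g'/g|=O(\e)$, and then invokes a separate period-by-period cancellation lemma (Lemma \ref{lm01}, which subtracts the value of $g$ at the left endpoint of each $2\pi$-period); you instead integrate by parts once in $t$ against $\cos\Phi=(\sin\Phi)'/\Phi'$. Both exploit the same oscillation, your bookkeeping for the $\theta''/(\theta')^2$ and $\Phi_\pm$ factors is correct (in particular $\Phi_-'\ge(1-\beta^{-1})\bar{\theta}'$ giving the $(1-\beta^{-1})^{-2}$ constant, and $\Phi_+'\ge\bar{\theta}'$ giving the $O(1)$ one), and your method arguably yields cleaner constants. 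The periodicity issue you flag is real but benign: it is the same one the paper faces in needing $2\theta(1)-2\theta(0)\in 2\pi\mathbb{Z}$ for Lemma \ref{lm01} to apply.

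There is, however, a concrete gap at the amplitude-derivative terms. The hypothesis $(a,\theta)\in U_\e$ gives $|a'|\le\e\theta'$, so your bulk term $aa'/\theta'$ is bounded pointwise only by $\e|a|$, not by $\e a^2$; integrating gives $\e\|a\|_{L^1}\le\e\|a\|_{L^2}$, and $\e\|a\|_{L^2}$ is \emph{not} $O(\e)\|a\|_{L^2}^2$ when $\|a\|_{L^2}$ is small, so the stated form of \eqref{opt07} does not follow. The same problem recurs in \eqref{opt08}: $(a\bar{a})'/\Phi_\pm'$ is pointwise $O\bigl(\e(|a|+|\bar{a}|)\bigr)$ rather than $O(\e\,a\bar{a})$, so ``keeping everything pointwise in $a\bar{a}$'' does not actually produce $\int_0^1 a\bar{a}\,\mathd t$. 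To close this you need either a normalization bounding $a$, $\bar{a}$ away from zero, or the relative bound $|a'/(a\theta')|\le\e$. It is worth noting that the paper's own proof silently assumes exactly this: its inequality \eqref{opt14}, $|g'(s)/g(s)|<\tfrac32\e$ with $g=a^2/\theta'$, contains the term $a'/(a\theta')$, which is controlled by $\e$ only if $a\ge1$ (or the relative bound holds). So your argument is on the same footing as the paper's, but you should state the extra assumption explicitly rather than derive the bound from $\|a\|_{L^1}\le\|a\|_{L^2}$, which is where the proposal as written breaks.
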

The proof of this lemma can be found in Appendix B. Then \eqref{opt51} is just a direct corollary of the above lemma.
\begin{corollary}\label{crl01}
Let $(a_k,\theta_k),\; k=1,\cdots,M$ be well-separated with frequency ratio $d$ and
separation factor $\e$ as defined in Definition \ref{def:well-sep-intro}. Let $(a,\theta)\in U_\e$ and
there exists $\alpha\in [1,d)$ and $l \in \{1,\cdots,M\}$ such that
\begin{align}
 \alpha^{-1}\theta_{l}'(t)\leq \theta'(t)\leq \alpha\, \theta_{l}'(t),\qquad \forall t\in [0,1].
\end{align}
Then for any $1\leq k\ne l\leq M$, we have
\begin{align}\label{opt43}
\begin{split}
\mu_{k,l}:=\frac{|\langle a_{k}\cos\theta_{k}, a_{l}\cos\theta_{l}\rangle|}{\|a_l\cos\theta_l\|_{L^2}\|a_k\cos\theta_k\|_{L^2}}
=O(\e), \\
\mu_{k,l,\alpha}:=\frac{|\langle a_{k}\cos\theta_{k}, a\cos\theta\rangle|}{\|a\cos\theta\|_{L^2}\|a_k\cos\theta_k\|_{L^2}} 
=O(\e).
\end{split}
\end{align}
\end{corollary}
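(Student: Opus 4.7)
The plan is to derive Corollary \ref{crl01} as a direct consequence of Lemma \ref{lm02}, which supplies the two needed ingredients: a lower bound on $\|a\cos\theta\|_{L^2}$ in terms of $\|a\|_{L^2}$, and an $O(\e)$ bound on the inner product of two oscillatory factors whose frequencies are separated multiplicatively. The argument proceeds by bounding the numerator of each coherence with the second part of Lemma \ref{lm02}, bounding the denominator from below with the first part, and closing with a Cauchy--Schwarz estimate on the resulting amplitude integral.

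For $\mu_{k,l}$ I would proceed as follows. The well-separation assumption gives either $\theta_k'(t)\ge d\,\theta_l'(t)$ for all $t$ or $\theta_l'(t)\ge d\,\theta_k'(t)$ for all $t$; by the symmetry of the inner product, up to relabeling I may take the first case and apply Lemma \ref{lm02} with $\beta=d$. This yields
\begin{equation*}
\bigl|\langle a_l\cos\theta_l,\,a_k\cos\theta_k\rangle\bigr|\;<\;4\e\Bigl(1+\tfrac{1}{(1-d^{-1})^2}\Bigr)\int_0^1 a_l(t)\,a_k(t)\,\mathd t\;\le\;4\e\Bigl(1+\tfrac{1}{(1-d^{-1})^2}\Bigr)\|a_l\|_{L^2}\|a_k\|_{L^2},
\end{equation*}
using Cauchy--Schwarz in the last step. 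On the other hand, the first part of Lemma \ref{lm02} gives $\|a_j\cos\theta_j\|_{L^2}\ge\sqrt{\tfrac12-3\e}\,\|a_j\|_{L^2}$ for $j=k,l$, so the $\|a_k\|_{L^2}\|a_l\|_{L^2}$ factors cancel in the ratio defining $\mu_{k,l}$. Since $d-1=O(1)$ by assumption, the constant $1+(1-d^{-1})^{-2}$ is $O(1)$, and $(\tfrac12-3\e)^{-1}=O(1)$ for $\e\ll 1$, leaving $\mu_{k,l}=O(\e)$.

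For $\mu_{k,l,\alpha}$ the strategy is identical once I establish a multiplicative frequency gap between $\theta$ and $\theta_k$. Using the sandwich hypothesis $\alpha^{-1}\theta_l'\le\theta'\le\alpha\,\theta_l'$ together with the well-separation of $\theta_k'$ from $\theta_l'$: if $\theta_k'\ge d\,\theta_l'$ then $\theta_k'/\theta'\ge d/\alpha$, and if $\theta_l'\ge d\,\theta_k'$ then $\theta'/\theta_k'\ge d/\alpha$. Because $\alpha<d$, in both cases I have a ratio bounded below by $\beta=d/\alpha>1$, and $\beta-1$ is bounded away from zero by a constant. Applying Lemma \ref{lm02} in the appropriate orientation together with the same lower bound on $\|a_k\cos\theta_k\|_{L^2}$ and $\|a\cos\theta\|_{L^2}$, followed by Cauchy--Schwarz on $\int_0^1 a\,a_k\,\mathd t$, produces $\mu_{k,l,\alpha}=O(\e)$ with an implicit constant depending only on $d$, $\alpha$, and universal quantities.

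There is really no significant obstacle here; the substance of the estimate is entirely contained in Lemma \ref{lm02}, and the corollary is a bookkeeping step. The only point requiring mild care is the case split for $\mu_{k,l,\alpha}$ and the decision of which of the two factors to identify with $(a,\theta)$ versus $(\bar a,\bar\theta)$ in Lemma \ref{lm02}, since that lemma is stated with the higher-frequency pair labelled $(\bar a,\bar\theta)$; relabeling is justified by the symmetry of the inner product.
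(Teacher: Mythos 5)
Your proposal is correct and follows essentially the same route as the paper: bound the numerator with the second part of Lemma \ref{lm02}, bound the denominator from below with the first part, and connect them via Cauchy--Schwarz on $\int_0^1 a\,\bar a\,\mathd t$. The only cosmetic difference is that the paper uses the sharper frequency gaps $\beta=d^{|k-l|}$ and $\beta=\alpha^{-1}d^{|k-l|}$ where you use $d$ and $d/\alpha$, which merely changes the $O(1)$ constant.
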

\begin{proof}
For any $1\leq k\ne l\leq M$, suppose $\theta'_k>\theta'_l$. Since $(a_k,\theta_k),\; k=1,\cdots,M$ are well-separated,
we have
\begin{eqnarray}
  \min_{t\in [0,1]}\frac{\theta'_k}{\theta'_l}>d^{|k-l|}.
\end{eqnarray}
 Using Lemma \ref{lm02}, it is easy to check that
\begin{align*}
\begin{split}
\mu_{k,l}&=\frac{|\langle a_{k}\cos\theta_{k}, a_{l}\cos\theta_{l}\rangle|}{\langle a_k,a_l\rangle}\cdot
\frac{\langle a_k,a_l\rangle}{\|a_l\cos\theta_l\|_{L^2}\|a_k\cos\theta_k\|_{L^2}}\\
&\le\frac{|\langle a_{k}\cos\theta_{k}, a_{l}\cos\theta_{l}\rangle|}{\langle a_k,a_l\rangle}\cdot
\frac{\|a_k\|_{L^2}}{\|a_k\cos\theta_k\|_{L^2}}\cdot
\frac{\|a_l\|_{L^2}}{\|a_l\cos\theta_l\|_{L^2}}\\
&<4\e\left(\frac1{2}-3\e\right)^{-1}\left(1+\frac1{(1-d^{-|l-k|})^2}\right).\\
\end{split}
\end{align*}
For any $1\leq k\ne l\leq M$, since $(a_k,\theta_k),\; k=1,\cdots,M$ are well-separated and
\begin{align}
 \alpha^{-1}\theta_{l}'(t)\leq \theta'(t)\leq \alpha\, \theta_{l}'(t),\qquad \forall t\in [0,1].
\end{align}
if $\theta'_k<\theta'_l$, we know that
\begin{align}\label{opt39a}
\frac{ \theta'}{ \theta_k'}\geq \alpha^{-1}d^{l-k}.
\end{align}
And if $\theta'_k>\theta'_l$, we have
\begin{align}\label{opt40a}
\frac{\theta_k'}{ \theta'}\geq \alpha^{-1}d^{k-l}.
\end{align}
Then it follows from Lemma \ref{lm02} that, for $k\neq l$,
\begin{align*}
\begin{split}
\mu_{k,l,\alpha}
&\le\frac{|\langle a_{k}\cos\theta_{k}, a\cos\theta\rangle|}{\langle a_k,a\rangle}\cdot
\frac{\|a_k\|_{L^2}}{\|a_k\cos\theta_k\|_{L^2}}\cdot
\frac{\|a\|_{L^2}}{\|a\cos\theta\|_{L^2}}\\
&<4\e\left(\frac1{2}-3\e\right)^{-1}\left(1+\frac1{(1-\alpha d^{-|l-k|})^2}\right).
\end{split}
\end{align*}
\end{proof}

\begin{remark}\label{rmk41}
We remark that in Corollary \ref{crl01} and Theorem \ref{thm:opt-l2} the condition \eqref{opt33} could be replaced by
\begin{align}\label{opt33a}
\frac{\theta'(t)}{\theta_{l-1}'(t)} \geq \frac{d}{\alpha}, \ \
\frac{\theta_{l+1}'(t)}{\theta'(t)} \geq \frac{d}{\alpha},
\qquad \forall t\in [0,1].
\end{align}
\end{remark}

 If we put a stronger assumption on the separation ratio $d$ such that $d>M'^2$ (where $M'$ is defined in Definition \ref{scale-seperation}), 
then we obtain  the following theorem about the global minimizer of the $P_2$ problem.
 \begin{theorem}\label{thm:opt-l3}
 Let $f(t)$ be a function satisfying the scale-separation property with
 separation factor $\e$ and frequency ratio $d$ as defined in
 Definition \ref{def:well-sep-intro}:
   \begin{eqnarray*}
     f(t)=\sum_{k=1}^Ma_k(t)\cos\theta_k(t)+r(t), \quad a_k\cos\theta_k\in U_\e,\quad a_k=O(1),\; r=O(\e).
   \end{eqnarray*}
 Suppose further that $d>M'^2$, and that there exists $l\in \overline{1,M}$ such that
 \begin{align}\label{opt61}
  \|a_{k}\cos \theta_k\|< \|a_{l}\cos \theta_l\|,\qquad \forall k\in \overline{1,M}, k\neq l.
 \end{align}
 If
 \begin{align}\label{opt62}
  p(a,\theta)\leq p(a_l, \theta_l),
 \end{align}
 where $p(a,\theta)$ is given in \eqref{P2},
 then we have
 \begin{align}\label{opt63}
  \frac{\|a\cos\theta-a_l\cos \theta_l\|_{L^2}}{\|a_l\cos \theta_l\|_{L^2}}=O(\sqrt{\e}) .
 \end{align}
 \end{theorem}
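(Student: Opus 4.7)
The plan is to reduce Theorem~\ref{thm:opt-l3} to Theorem~\ref{thm:opt-l2} by exploiting the strengthened frequency ratio $d>M'^2$. That gap forces any candidate $(a,\theta)\in U_\e$ to match at most one of the IMF frequencies $\theta_k'$ in a precise pointwise sense, and the strict largest-amplitude hypothesis \eqref{opt61} then forces the matching index to be exactly $l$.

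First I would localize $\theta$ in frequency. Let $m=\inf_t\theta'(t)$ and $m_k=\inf_t\theta_k'(t)$, so $\theta'(t)\in[m,M'm]$ and $\theta_k'(t)\in[m_k,M'm_k]$. Pick any $\beta$ with $1<\beta<\sqrt{d}/M'$, which is possible because $d>M'^2$. Since consecutive $m_{k+1}/m_k\ge d>M'^2\beta^2$, pigeonhole gives at most one index $k_*$ with $m_{k_*}\in[m/(M'\beta),\,M'\beta m]$, and for every $k\ne k_*$ one obtains either $\theta_k'(t)/\theta'(t)\ge\beta$ or $\theta'(t)/\theta_k'(t)\ge\beta$ uniformly in $t$. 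Lemma~\ref{lm02} then yields
\begin{align*}
\left|\langle a\cos\theta,\,a_k\cos\theta_k\rangle\right|=O(\e)\,\|a\|_{L^2}\|a_k\|_{L^2},\qquad k\ne k_*,
\end{align*}
while well-separation of the IMFs gives the analogous $O(\e)$ bound for $\left|\langle a_l\cos\theta_l,a_k\cos\theta_k\rangle\right|$ whenever $k\ne l$.

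Next I would show $k_*=l$ by contradiction. Suppose $k_*\ne l$, treating the degenerate case ``no overlap index exists'' by formally setting $\|a_{k_*}\cos\theta_{k_*}\|_{L^2}=0$. Expanding $p(a,\theta)-p(a_l,\theta_l)\le 0$, applying Cauchy--Schwarz to the single possibly-large cross term $\left|\langle a_{k_*}\cos\theta_{k_*},a\cos\theta\rangle\right|$, and invoking the bounds above on all other cross terms leads to
\begin{align*}
\bigl(\|a\cos\theta\|_{L^2}-\|a_{k_*}\cos\theta_{k_*}\|_{L^2}\bigr)^2+\|a_l\cos\theta_l\|_{L^2}^2-\|a_{k_*}\cos\theta_{k_*}\|_{L^2}^2\le O(\e).
\end{align*}
Under \eqref{opt61} the difference $\|a_l\cos\theta_l\|_{L^2}^2-\|a_{k_*}\cos\theta_{k_*}\|_{L^2}^2$ is a strictly positive $O(1)$ quantity, contradicting the inequality once $\e\ll 1$. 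Hence $k_*=l$.

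Finally, with $k_*=l$ the Step~1 localization gives the pointwise bound $\alpha^{-1}\theta_l'(t)\le\theta'(t)\le\alpha\theta_l'(t)$ with $\alpha:=M'^2\beta$, and the choice $\beta<\sqrt{d}/M'$ forces $\alpha<M'\sqrt{d}<d$ (using $M'^2<d$). Every other hypothesis of Theorem~\ref{thm:opt-l2} is inherited directly from that of Theorem~\ref{thm:opt-l3}, and the condition $p(a,\theta)\le p(a_l,\theta_l)$ is identical, so Theorem~\ref{thm:opt-l2} immediately delivers the desired bound \eqref{opt63}. The main obstacle I expect is the pointwise pigeonhole in Step~1: one must capture at most one overlap index $k_*$ while simultaneously retaining a separation $\beta>1$ strong enough to keep the constant $1/(1-\beta^{-1})^2$ from Lemma~\ref{lm02} bounded, and only the strict inequality $d>M'^2$ provides the slack for both requirements at once.
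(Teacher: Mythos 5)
Your proof is correct, and while it shares the paper's overall strategy---identify the single IMF index whose frequency band $\theta'$ can track, verify the hypothesis \eqref{opt33} of Theorem \ref{thm:opt-l2}, and use the amplitude-dominance condition \eqref{opt61} to force that index to equal $l$---the two key technical steps are carried out differently. The paper locates the tracked index by minimizing the max-ratios $\alpha_k$ of \eqref{opt69} and proving by contradiction that the minimizer $k_0$ is separated from its \emph{neighbors} $k_0\pm1$ by the factor $\sqrt{d}/M'$, which only yields the weakened condition \eqref{opt33a} and hence requires Remark \ref{rmk41}; you instead pigeonhole the infima $m_k=\inf\theta_k'$ against the window $[m/(M'\beta),\,M'\beta m]$ and obtain the full two-sided sandwich \eqref{opt33} with $\alpha=M'^2\beta<d$ directly, so Theorem \ref{thm:opt-l2} applies as stated. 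The order of operations is also reversed: the paper first applies Theorem \ref{thm:opt-l2} to $k_0$ and then deduces $k_0=l$ a posteriori from $|p(a,\theta)-p(a_{k_0},\theta_{k_0})|=O(\sqrt{\e})$ together with \eqref{opt61}--\eqref{opt62}, whereas you establish $k_*=l$ first by a direct Cauchy--Schwarz expansion of $p(a,\theta)-p(a_l,\theta_l)\le 0$, isolating the one possibly-large cross term. Your route buys two small improvements: it needs the gap in \eqref{opt61} only to exceed $O(\e)$ rather than the $O(\sqrt{\e})$ implicitly required by the paper's step \eqref{opt79}, and it invokes Theorem \ref{thm:opt-l2} exactly once with its original hypothesis. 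Both arguments share the same unquantified use of the strict inequality in \eqref{opt61} (a gap of definite size is tacitly assumed) and the same dependence of the $O(\e)$ constants on $1/(1-M'/\sqrt{d})$ as $d\downarrow M'^2$, which you correctly flag; you should also record explicitly that $\|a\cos\theta\|_{L^2}=O(1)$ follows from $p(a,\theta)\le p(a_l,\theta_l)$ before absorbing the $\langle r,a\cos\theta\rangle$ term into $O(\e)$, a point the paper likewise leaves implicit.
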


 \begin{proof}
 First, we claim that, for each $k\in \overline{1,M}$,
 \begin{align}\label{opt65}
 p(a_k, \theta_k)=\|f\|_{L^2}^2-\|a_k\cos \theta_k\|_{L^2}^2+O(\e).
 \end{align}
 Granting this, it follows from \eqref{opt61} that
 \begin{align}\label{opt65-2}
 p(a_k, \theta_k)>p(a_l, \theta_l)\qquad \forall k\neq l.
 \end{align}
 To show \eqref{opt65}, we notice that
 \begin{align}\label{opt66}
 \begin{split}
 &p(a_k, \theta_k)=\|f-a_k\cos \theta_k\|_{L^2}^2 \\
 =&\|f\|_{L^2}^2+\|a_k\cos \theta_k\|_{L^2}^2-2\langle a_k\cos \theta_k+\sum_{k'\neq k}a_{k'}\cos\theta_{k'}+r(t), a_k\cos \theta_k\rangle \\
 =&\|f\|_{L^2}^2-\|a_k\cos \theta_k\|_{L^2}^2-2\langle \sum_{k'\neq k}a_{k'}\cos\theta_{k'}+r(t), a_k\cos \theta_k\rangle.
 \end{split}
 \end{align}
 By Lemma \ref{lm02}, we get
 \begin{align}\label{opt67}
 \langle a_{k'}\cos\theta_{k'}, a_k\cos \theta_k\rangle=O(\e)
 \end{align}
 for each $k'\neq k$. By the assumption in Definition \ref{def:well-sep-intro}, we have $r(t)=O(\e)$, which implies
 \begin{align}\label{opt68}
 \langle r(t), a_k\cos \theta_k\rangle=O(\e).
 \end{align}
 Substituting \eqref{opt67} and \eqref{opt68} into \eqref{opt66} proves \eqref{opt65}.

 Let $\alpha:=\frac{\sqrt{d}}{M'}$ (so $\alpha>1$) and let
 \begin{align}\label{opt69}
 &\alpha_k:=
  \left\{\begin{array}{lcl}
  \max_{t\in [0,1]}\frac{\theta'(t)}{\theta_k'(t)}, & & k=1 \\
  \max_{t\in [0,1]}\frac{\theta_k'(t)}{\theta'(t)}, & & k=M \\
  \max\left\{ \max_{t\in [0,1]}\frac{\theta'(t)}{\theta_k'(t)}, \max_{t\in [0,1]}\frac{\theta_k'(t)}{\theta'(t)} \right\}, & &  1<k<M
  \end{array}\right.
   \\ \label{opt70}
 &\alpha_{min}:= \min_{1\le k\le M}\left\{ \alpha_k \right\}.
 \end{align}
 Let $k_0$ be such that $\alpha_{k_0}=\alpha_{min}$. We Claim that
 \begin{enumerate}
 \item[(1)] If $k_0>1$,
 \begin{align}\label{opt71}
 \frac{\theta'(t)}{\theta_{k_0-1}'(t)}\ge \alpha, \qquad \forall t\in [0,1].
 \end{align}

 \item[(2)] If $k_0<K$,
 \begin{align}\label{opt72}
 \frac{\theta_{k_0+1}'(t)}{\theta'(t)}\ge \alpha, \qquad \forall t\in [0,1].
 \end{align}
 \end{enumerate}
 We prove item (1) by contradiction. Assume that there exists $t_1\in [0,1]$ such that
 \begin{align}\label{opt71a}
 \frac{\theta'(t_1)}{\theta_{k_0-1}'(t_1)}< \alpha.
 \end{align}
 Then for any $t\in [0,1]$, we have
 \begin{align}\label{opt73}
 \frac{\theta'(t)}{\theta_{k_0-1}'(t)}=\frac{\theta'(t)}{\theta'(t_1)}\cdot \frac{\theta'(t_1)}{\theta_{k_0-1}'(t_1)}
 \cdot \frac{\theta_{k_0-1}'(t_1)}{\theta_{k_0-1}'(t)}< M'^2\alpha=M'\sqrt{d}.
 \end{align}
 On the other hand, \eqref{opt71a} also implies
 that
 \begin{align}\label{opt74}
 \frac{\theta_{k_0}'(t_1)}{\theta'(t_1)}
 =\frac{\theta_{k_0}'(t_1)}{\theta_{k_0-1}'(t_1)} \left( \frac{\theta'(t_1)}{\theta_{k_0-1}'(t_1)}\right)^{-1}> \frac{d}{\alpha}=M'\sqrt{d}.
 \end{align}
 Using \eqref{opt73}, \eqref{opt74}, and the following estimates
 \begin{align}\label{opt75-1}
 \frac{\theta_{k_0-1}'(t_1)}{\theta'(t_1)}
 \leq
 \frac1{d}\cdot\frac{\theta_{k_0}'(t_1)}{\theta'(t_1)}<\frac{\theta_{k_0}'(t_1)}{\theta'(t_1)},
 \end{align}
 we get
 \begin{align}\label{opt75}
 \alpha_{k_0-1}
 <
 \max_{t\in [0,1]} \frac{\theta_{k_0}'(t)}{\theta'(t)}\leq \alpha_{k_0}.
 \end{align}
 This contradicts $\alpha_{k_0}=\alpha_{min}$. So item (1) is satisfied. Similarly we can show that item (2) is true.

 It follows from \eqref{opt62} and \eqref{opt65-2} that
 \begin{align}\label{opt76}
  p(a,\theta)\leq p(a_{k_0}, \theta_{k_0}).
 \end{align}
 Using Theorem \ref{thm:opt-l2} by replacing condition \eqref{opt33} by \eqref{opt33a} in Remark \ref{rmk41}, we obtain that
 \begin{align}\label{opt77}
  \frac{\|a\cos\theta-a_{k_0}\cos \theta_{k_0}\|_{L^2}}{\|a_{k_0}\cos \theta_{k_0}\|_{L^2}}= O(\sqrt{\e}).
 \end{align}
 This implies
 \begin{align}\label{opt78}
 \|a\cos\theta-a_{k_0}\cos \theta_{k_0}\|_{L^2} = O(\sqrt{\e}).
 \end{align}
 Therefore we obtain
 \begin{align}\label{opt79}
 \begin{split}
 &\left|p(a, \theta)-p(a_{k_0}, \theta_{k_0})\right| \\
 =&\|a_{k_0}\cos\theta_{k_0}-a\cos\theta\|_{L^2}^2
 +2\left| \langle a_{k_0}\cos\theta_{k_0}-a\cos\theta, \sum_{k\neq {k_0}}a_{k}\cos\theta_{k}+r(t)\rangle\right| \\
 =& O(\sqrt{\e}),
 \end{split}
 \end{align}
 where the first equality is deduced using an argument similar to the equalities in \eqref{opt37}, and the second one follows from \eqref{opt78}.
 So by \eqref{opt62}, we see that $k_0=l$ and $\alpha_k >\alpha_l$ for all $k\neq l$. Thus \eqref{opt77} implies \eqref{opt63}. This completes the proof.

 \end{proof}

The condition that $d>M'^2$ is very strong when $M'$ is big. However, if this condition is violated, the global minimizer to the $P_2$ problem
may not be any of the IMFs in the sparsest representation. For example, let us consider an artificial signal $f(t)$ defined as follows:
\begin{align}\label{opt81}
\begin{split}
f(t)=&a_1(t)\cos\theta_1(t)+a_2(t)\cos\theta_2(t), \quad t\in [0,6], \\
\text{where } & a_1(t)=2+t, \ \ \theta_1(t)=\left\{\begin{array}{lcl}
10\pi t & & t\in [0,2] \\ 20\pi + 10\pi(t-2) +\frac{5\pi}{3}(t-2)^3  & & t\in [2,3] \\ 
50\pi+20\pi(t-4)-\frac{5\pi}{3}(t-4)^3 & & t\in [3,4] \\ 
50\pi+20\pi(t-4) & & t\in [4,6]
\end{array}\right. , \\
&a_2(t)=8-t, \ \  \theta_2(t)=2\theta_1(t).
\end{split}
\end{align}
Here we see that $d=M'=2$. Moreover, it is easy to verify that
\begin{align}\label{opt82}
\left|\frac{a_1'(t)}{\theta_1'(t)}\right|\le \frac1{10\pi},\; \;
\left|\frac{\theta_1''(t)}{\left(\theta_1'(t)\right)^2}\right|\le\frac1{10\pi};\; \;
\left|\frac{a_2'(t)}{\theta_2'(t)}\right|\le \frac1{20\pi},\; \;
\left|\frac{\theta_2''(t)}{\left(\theta_2'(t)\right)^2}\right|\le\frac1{20\pi}, \quad \forall t\in [0,6].
\end{align}
So $a_1(t)\cos\theta_1(t)$ and $a_2(t)\cos\theta_2(t)$ are well separated and both satisfy the scale separation property on the time domain $[0,6]$.
If we attempt to solve the problem \eqref{P2} with parameter $\e=\frac1{10\pi}$, we will find that the solution $(a,\theta)$ with
\begin{align}\label{opt83}
a(t)=\max\{a_1(t), a_2(t)\}=5+|t-3|, \ \ \theta(t)=20\pi t
\end{align}
is better than $(a_1,\theta_1)$ or $(a_2,\theta_2)$ (here we note that $(a, \theta)$ is obtained by simply connecting the left
part of $(a_2, \theta_2)$ and the right part of $(a_1, \theta_1)$).
In fact, numerical estimation gives us that
\begin{align}\label{opt84}
\begin{split}
&p(a, \theta)=\|f(t)-a(t)\cos\theta(t)\|_{L^2}\approx 72.4, \\
&p(a_1, \theta_1)=\|a_2\cos\theta_2\|_{L^2}^2\approx 84, \\
&p(a_2, \theta_2)=\|a_1\cos\theta_1\|_{L^2}^2\approx 84.
\end{split}
\end{align}
Thus if $d\le M'^2$, mode mixing may occur. 

However, Theorem \ref{thm:opt-l3} is still helpful for us even if the condition that $d>M'^2$ is violated. 
Consider the signal $\displaystyle f(t)=\sum_{k=1}^Ma_k(t)\cos\theta_k(t)+r(t)$ on $[0,1]$.
We could partition the whole time domain $[0,1]$ into $m$ subintervals $[t_{i-1}, t_{i}] (i=1,\dots, m)$, where
$0=t_0<t_1<\dots<t_m=1$, and on each subinterval, we have 
\begin{align}\label{opt85}
\frac{\sup_{t\in [t_{i-1}, t_{i}]}\theta_k'(t)}{\inf_{t\in [t_{i-1}, t_{i}]}\theta_k'(t)}<\sqrt{d}, 
\quad k=1,2,\dots, M.
\end{align}
So we could find each IMF on each subinterval efficiently, and connect different parts of each IMF together. 
This is the basic idea of the upcoming work of the authors: Two-level method in sparse time-frequency
representation of multiscale data.

\section{Concluding Remarks}
\label{sec:conclude}

In this paper, we discussed the uniqueness of the decomposition obtained by the sparse time frequency decomposition.
We proved that under the assumption of scale separation, the decomposition is unique up to an error associate
with the scale separation. Moreover, we showed that under the same assumption, nonlinear matching pursuit could be used to obtain this
sparse decomposition. The results in this paper establish a solid foundation for the sparse time frequency
decomposition.

In our future work, we would like to relax some of the assumptions of scale separation. In many problems, the decompositions seem to be unique although the scale separation is not satisfied. We plan to perform further theoretical study and get some guidance to decompose signals with poor scale separation.

\vspace{0.2in}
\noindent
{\bf Acknowledgments.}
This work was supported by NSF FRG Grant DMS-1159138, DMS-1318377, an AFOSR MURI Grant FA9550-09-1-0613 and a DOE grant DE-FG02-06ER25727.
The research of Dr. Z. Shi was also in part supported by a NSFC Grant 11201257.
The research of Dr. C.G. Liu was also supported by a NSFC Grant 11371173.

\appendix
\setcounter{section}{1}
\vspace{10mm}
\noindent\textbf{\large  Appendix A: Proof of Lemma \ref{lem:wft}}
\begin{proof} {\it of Lemma \ref{lem:wft}}

First, we have that
\begin{eqnarray}
  \widehat{\psi}(\omega\theta'(t))=\int_{\mathbb{R}} e^{-i\omega\theta'(t)z}\psi(z)\mathd z=\frac{1}{\omega}\int_{\mathbb{R}} e^{-i\theta'(t)z}\psi\left(\frac{z}{\omega}\right)\mathd z .
\end{eqnarray}
Using this relation, we obtain
\begin{eqnarray}
&&  \frac{1}{\sqrt{\omega}}\int_{\mathbb{R}} a(\tau)e^{-i\theta(\tau)}\psi\left(\frac{\tau-t}{\omega}\right)\mathd \tau-|\omega|^{1/2}a(t)e^{-i\theta(t)}\widehat{\psi}(\omega\theta'(t))\nonumber\\
&=& \frac{1}{\sqrt{\omega}}\left[\int_{\mathbb{R}} (a(\tau)-a(t))e^{-i\theta(\tau)}\psi\left(\frac{\tau-t}{\omega}\right)\mathd \tau+
a(t)\int_{\mathbb{R}} (e^{-i\theta(\tau)}-e^{-i(\theta(t)+\theta'(t)(\tau-t))})\psi\left(\frac{\tau-t}{\omega}\right)\mathd \tau \right]. \nonumber\\
\label{eqn:error-wft}
\end{eqnarray}
For the first term, we have
\begin{eqnarray}
&&  |\omega|^{-1/2}\int_{\mathbb{R}} (a(\tau)-a(t))e^{-i\theta(\tau)}\psi\left(\frac{\tau-t}{\omega}\right)\mathd \tau\nonumber\\
&=& |\omega|^{-1/2}\int_{\mathbb{R}} h(\tau,t)e^{-i\theta(\tau)}\mathd \tau\nonumber\\
&=& -i |\omega|^{-1/2}\int_{\mathbb{R}} \left(\frac{h(\tau,t)}{\theta'(\tau)}\right)'e^{-i\theta(\tau)}\mathd \tau , \nonumber
\end{eqnarray}
where
\begin{eqnarray}
   h(\tau,t)=(a(\tau)-a(t))\psi\left(\frac{\tau-t}{\omega}\right) .
\end{eqnarray}
Direct calculation gives that
\begin{eqnarray}
  \left(\frac{h(\tau,t)}{\theta'(\tau)}\right)'=\frac{a'(\tau)}{\theta'(\tau)}\psi\left(\frac{\tau-t}{\omega}\right)
+\frac{a(\tau)-a(t)}{\theta'(\tau)}\frac1{\omega}\psi'\left(\frac{\tau-t}{\omega}\right)
-\frac{h(\tau,t)\theta''(\tau)}{(\theta'(\tau))^2} .
\end{eqnarray}
Using the assumption that $(a,\theta)\in U_\e$, we have
\begin{eqnarray}
  \frac{a'(\tau)}{\theta'(\tau)}\le \e\quad \frac{\theta''(\tau)}{(\theta'(\tau))^2}\le \e ,
\end{eqnarray}
and
\begin{eqnarray}
\left|\frac{a(\tau)-a(t)}{\theta'(\tau)}\right|=\left|\frac{a'(t_\tau)(\tau-t)}{\theta'(\tau)}\right|
=\left|\frac{\theta'(t_\tau)}{\theta'(\tau)}\cdot\frac{a'(t_\tau)(\tau-t)}{\theta'(t_\tau)}\right|
\leq M'\e|\tau-t|,
\end{eqnarray}
where $t_\tau$ is a point between $t$ and $\tau$.
Then, we obtain
\begin{eqnarray}
\label{eqn:a}
  |\omega|^{-1/2}\left|\int_{\mathbb{R}} (a(\tau)-a(t))e^{-i\theta(\tau)}\psi\left(\frac{\tau-t}{\omega}\right)\mathd \tau\right|
\le \e |\omega|^{1/2}\left[(A+|a(t)|+1)I_1+M'I_2\right],
\end{eqnarray}
where $A=\sup_{t\in\mathbb{R}}|a(t)|$ and
\begin{eqnarray}
  I_1=\int_{\mathbb{R}} |\psi(\tau)|\mathd \tau,\quad I_2=\int_{\mathbb{R}}|\tau\psi'(\tau)|\mathd \tau .
\end{eqnarray}
Now, we turn to bound the second term of \eqref{eqn:error-wft}. We have
\begin{eqnarray}
   &&|\omega|^{-1/2}
\int_{\mathbb{R}} (e^{-i\theta(\tau)}-e^{-i(\theta(t)+\theta'(t)(\tau-t))})\psi\left(\frac{\tau-t}{\omega}\right)\mathd \tau \nonumber\\
&=& |\omega|^{-1/2}\int_{\mathbb{R}} g(\tau,t)e^{-i\theta(\tau)}\mathd \tau\nonumber\\
&=& -i |\omega|^{-1/2}\int_{\mathbb{R}} \left(\frac{g(\tau,t)}{\theta'(\tau)}\right)'e^{-i\theta(\tau)}\mathd \tau , \nonumber
\end{eqnarray}
where
\begin{eqnarray}
  g(\tau,t)=(1-e^{i\Delta \theta})\psi\left(\frac{\tau-t}{\omega}\right),
\end{eqnarray}
and $\Delta\theta=\theta(\tau)-\theta(t)-\theta'(t)(\tau-t)$.

Direct calculations show that
\begin{eqnarray}
  \left(\frac{g(\tau,t)}{\theta'(\tau)}\right)'&=&-\frac{i(\theta'(\tau)-\theta'(t))}{\theta'(\tau)}e^{i\Delta\theta(\tau,t)}\psi\left(\frac{\tau-t}{\omega}\right)
+\frac{1}{\theta'(\tau)}(1-e^{i\Delta\theta(\tau,t)})\frac{1}{\omega}\psi'\left(\frac{\tau-t}{\omega}\right)\nonumber\\
&&-\frac{g(\tau,t)\theta''(\tau)}{(\theta'(\tau))^2} .
\label{eqn:dg}
\end{eqnarray}

Among the three terms, the third one is easiest to bound. We have
\begin{eqnarray}
\label{eqn:dg3}
  \left|\frac{g(\tau,t)\theta''(\tau)}{(\theta'(\tau))^2}\right|\le 2\e  \left|\psi\left(\frac{\tau-t}{\omega}\right)\right|.
\end{eqnarray}
It follows that
\begin{eqnarray}
  \label{eqn:g3}
  \left|\int_\mathbb{R}\frac{g(\tau,t)\theta''(\tau)}{(\theta'(\tau))^2}e^{-i\theta(\tau)}\mathd\tau\right|\le 2\e\int_\mathbb{R}|\psi(\tau)|\mathd\tau=2 I_1\e .
\end{eqnarray}
To bound the other two terms, we need some preparations.

First, by using the assumption $\left|\left(\frac{1}{\theta'}\right)'\right|
=\left|\frac{\theta''}{(\theta')^2}\right|\le \e$,
 we can bound $|\theta'(t)-\theta'(\tau)|$ by integrating the above inequality from $t$ to $\tau$ as follows
\begin{eqnarray}
  \left|\frac{1}{\theta'(\tau)}-\frac{1}{\theta'(t)}\right| \le \e |\tau-t|,
\end{eqnarray}
which gives that
\begin{eqnarray}
\label{eqn:t-diff}
  |\theta'(\tau)-\theta'(t)| \le \e \theta'(\tau)\theta'(t)|\tau-t|,\quad \forall t,\tau \in \mathbb{R}.
\end{eqnarray}

We now estimate $\left|1-e^{i\Delta\theta(\tau,t)}\right|$ in two different ways,
\begin{eqnarray}
\label{eqn:t-diff-1}
  \left|e^{-i\Delta\theta(\tau,t)}-1\right|&\le& |\Delta\theta(\tau,t)|\le |\theta'(t^*)-\theta'(t)||t-\tau|\le
\e \theta'(t)\theta'(t^*)|t-\tau|^2\nonumber\\ 
&\le& M'\e\theta'(\tau)\theta'(t)|t-\tau|^2,
\end{eqnarray}
where $t^*$ is a number between $t$ and $\tau$. And also
\begin{eqnarray}
\label{eqn:t-diff-2}
  \left|e^{-i\Delta\theta(\tau,t)}-1\right|&\le& |\Delta\theta(\tau,t)|\le |\theta'(t^*)-\theta'(t)||t-\tau|\le  
\left(\sup_{t\in\mathbb{R}}\theta'(t)-\inf_{t\in\mathbb{R}}\theta'(t)\right)|t-\tau|\nonumber\\
&\le&(M'-1)\left(\inf_{t\in\mathbb{R}}\theta'(t)\right)|t-\tau|
\le (M'-1)\theta'(t)|t-\tau|.
\end{eqnarray}
The first term can be estimated as follows,
\begin{eqnarray}
  \label{eqn:g1}
  &&\left|\int_\mathbb{R}\frac{i(\theta'(\tau)-\theta'(t))}{\theta'(\tau)}e^{i\Delta\theta(\tau,t)}
\psi\left(\frac{\tau-t}{\omega}\right)
e^{-i\theta(\tau)}\mathd \tau\right|\nonumber\\
&=&\left|\int_\mathbb{R}\left(1-\frac{\theta'(t)}{\theta'(\tau)}\right)
\psi\left(\frac{\tau-t}{\omega}\right)e^{i\theta'(t)(\tau-t)}\mathd \tau\right|\nonumber\\
&=&\left|\frac{1}{\theta'(t)}\int_\mathbb{R}\left[\left(1-\frac{\theta'(t)}{\theta'(\tau)}\right)\psi\left(\frac{\tau-t}{\omega}\right)\right]'e^{i\theta'(t)(\tau-t)}\mathd \tau\right|\nonumber\\
&\le&\left|
\int_\mathbb{R}\frac{\theta''(\tau)}{(\theta'(\tau))^2}\psi\left(\frac{\tau-t}{\omega}\right)e^{i\theta'(t)(\tau-t)}
\mathd \tau\right|+\left|\frac{1}{\omega}\int_\mathbb{R}\frac{\theta'(\tau)-\theta'(t)}{\theta'(t)\theta'(\tau)}
\psi'\left(\frac{\tau-t}{\omega}\right)e^{i\theta'(t)(\tau-t)}
\mathd \tau\right|\nonumber\\
&\le &\e |\omega|\left(\int_{\mathbb{R}}|\psi(\tau)|\mathd \tau+  \int_\mathbb{R}|\tau\psi'(\tau)|\mathd\tau\right)\nonumber\\
&\le & |\omega| (I_1+I_2)\e \;.
\end{eqnarray}
The second equality is obtained by integration by parts. To get the second inequality, we use the assumptions that 
$\left|\frac{\theta''(\tau)}{(\theta'(\tau))^2}\right|\le \e$ and \eqref{eqn:t-diff}.

Now, we turn to estimate the second term. 
Let $g_2(\tau,t)=\frac{1}{\theta'(\tau)}(1-e^{i\Delta\theta(\tau,t)})\frac{1}{\omega}\psi'\left(\frac{\tau-t}{\omega}\right)$.
First, using the fact that $\Delta\theta(\tau,t)=\theta(\tau)-\theta(t)-\theta'(t)(\tau-t)$, we have
\begin{eqnarray}
  \left|\int_{\mathbb{R}} g_2(\tau,t)e^{-i\theta(\tau)}\mathd \tau\right|&=& \left|e^{-i\theta(t)}\int_{\mathbb{R}} 
\frac{e^{-i\Delta\theta(\tau,t)}-1}{\omega\theta'(\tau)}\psi'\left(\frac{\tau-t}{\omega}\right)
e^{-i\theta'(t)(\tau-t)}\mathd \tau \right|\nonumber\\
&=&\left|\frac{1}{\theta'(t)}\int_{\mathbb{R}} 
\left[\frac{e^{-i\Delta\theta(\tau,t)}-1}{\omega\theta'(\tau)}\psi'\left(\frac{\tau-t}{\omega}\right)\right]'
e^{-i\theta'(t)(\tau-t)}\mathd \tau \right|,
\label{eqn:dg2}
\end{eqnarray}
and
\begin{eqnarray}
&&  \frac{1}{\theta'(t)}\left[\frac{e^{-i\Delta\theta(\tau,t)}-1}{\omega\theta'(\tau)}\psi'\left(\frac{\tau-t}{\omega}\right)\right]'\nonumber\\
&=&-\frac{\theta''(\tau)}{(\theta'(\tau))^2}\frac{(e^{-i\Delta\theta(\tau,t)}-1)}{\omega\theta'(t)}
\psi'\left(\frac{\tau-t}{\omega}\right)
-\frac{i(\theta'(\tau)-\theta'(t))}{\omega\theta'(\tau)\theta'(t)}e^{-i\Delta\theta(\tau,t)}
\psi'\left(\frac{\tau-t}{\omega}\right)\nonumber\\
&&+\frac{e^{-i\Delta\theta(\tau,t)}-1}{\omega^2\theta'(\tau)\theta'(t)}\psi''\left(\frac{\tau-t}{\omega}\right).
\end{eqnarray}

Then, \eqref{eqn:dg2} can be bounded term by term as follows: 
\begin{eqnarray}
&&\text{using \eqref{eqn:t-diff-2}},\quad
\left|\frac{\theta''(\tau)}{(\theta'(\tau))^2}\frac{(e^{-i\Delta\theta(\tau,t)}-1)}{\omega\theta'(t)}
\psi'\left(\frac{\tau-t}{\omega}\right)\right|
\le (M'-1)\e \left|\frac{\tau-t}{\omega}\psi'\left(\frac{\tau-t}{\omega}\right)\right|,\quad\quad\quad
\label{eqn:g2-1}
\\
&&\text{using \eqref{eqn:t-diff}},\quad
\left|  \frac{i(\theta'(\tau)-\theta'(t))}{\omega\theta'(\tau)\theta'(t)}e^{-i\Delta\theta(\tau,t)}
\psi'\left(\frac{\tau-t}{\omega}\right)
\right|
\le \e \left|\frac{\tau-t}{\omega}\psi'\left(\frac{\tau-t}{\omega}\right)\right|,
\label{eqn:g2-2}
\\
&&\text{using \eqref{eqn:t-diff-1}},\quad
  \left|\frac{e^{-i\Delta\theta(\tau,t)}-1}{\omega^2\theta'(\tau)\theta'(t)}\psi''\left(\frac{\tau-t}{\omega}\right)\right|
\le M' \e\left|\left(\frac{\tau-t}{\omega}\right)^2\psi''\left(\frac{\tau-t}{\omega}\right)\right| .
\label{eqn:g2-3}
\end{eqnarray}
By combining these inequalities, \eqref{eqn:g2-1},\eqref{eqn:g2-2} and \eqref{eqn:g2-3}, we get
\begin{eqnarray}
\label{eqn:g2}
  \left|\int_{\mathbb{R}} g_2(\tau,t)e^{-i\theta(\tau)}\mathd \tau\right|\le M' |\omega| (I_2+I_3)\e ,
\end{eqnarray}
where
\begin{eqnarray}
  I_3=\int_{\mathbb{R}}|\tau^2\psi''(\tau)|\mathd \tau .
\end{eqnarray}

\noindent
Then the proof is completed by combining \eqref{eqn:a}, \eqref{eqn:g3}, \eqref{eqn:g1} and \eqref{eqn:g2}.
\end{proof}

\vspace{10mm}
\noindent
\textbf{\large Appendix B: Proof of Lemma \ref{lm02}}

\vspace{3mm}

\setcounter{section}{2}
\setcounter{equation}{0}
To prove Lemma \ref{lm02}, we need the following technical lemma.
\begin{lemma}\label{lm01}
 Let $\e \in (0,1)$ and $g(t)$ be a positive, continuous, and piecewise $C^1$ function on $[c,c+2n\pi]$, where $n$ is an integer. Suppose
\begin{align}\label{smth11}
 \left|\frac{g'(t)}{g(t)}\right|<\e,\qquad \forall t\in [c,d].
\end{align}
Then we have
\begin{align}\label{smth12}
 \left|\int_c^{c+2n\pi} g(t)\cos t dt\right|
 <2\pi \e  \int_c^{c+2n\pi} g(t)dt.
\end{align}
\end{lemma}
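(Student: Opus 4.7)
The plan is to exploit the fact that $\cos t$ has mean zero on any full period, so the rapid oscillation of $\cos t$ against the slowly varying $g$ produces cancellation on every subinterval of length $2\pi$. I would partition $[c, c+2n\pi] = \bigcup_{k=0}^{n-1} I_k$ with $I_k := [c+2k\pi,\, c+2(k+1)\pi]$. On each $I_k$ the identity $\int_{I_k} \cos t \, dt = 0$ lets me subtract the constant $g_k := g(c+2k\pi)$ and write
$$\int_{I_k} g(t) \cos t \, dt = \int_{I_k} \bigl(g(t) - g_k\bigr) \cos t \, dt.$$

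Next I would estimate $|g(t) - g_k|$ uniformly on $I_k$ using the log-derivative hypothesis. Since $g$ is continuous and piecewise $C^1$, the fundamental theorem of calculus gives $g(t) - g_k = \int_{c+2k\pi}^t g'(s) \, ds$, and the bound $|g'(s)| < \e\, g(s)$ together with positivity of $g$ yields
$$|g(t) - g_k| < \e \int_{I_k} g(s) \, ds \qquad \text{for every } t \in I_k.$$
Combining this with $|\cos t| \le 1$, integrating over $I_k$, and bounding the length $|I_k| = 2\pi$, I obtain
$$\left| \int_{I_k} g(t) \cos t \, dt \right| < 2\pi \e \int_{I_k} g(s) \, ds.$$

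Finally, summing the triangle inequality over $k = 0, \ldots, n-1$ gives the conclusion
$$\left| \int_c^{c+2n\pi} g(t) \cos t \, dt \right| < 2\pi \e \int_c^{c+2n\pi} g(s) \, ds.$$
I do not anticipate any real obstacle. The only subtle point is the strictness of the final inequality, which is inherited cleanly from the strict hypothesis $|g'/g| < \e$; alternatively, the sharper bound $\int_{I_k} |\cos t| \, dt = 4 < 2\pi$ makes the strict inequality transparent and in fact gives room to spare in the constant.
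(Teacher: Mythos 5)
Your proposal is correct and follows essentially the same route as the paper: partition into full periods, subtract the value of $g$ at the left endpoint of each period (using $\int\cos t\,dt=0$ over a period), control the increment via the fundamental theorem of calculus and $|g'|<\e g$, and sum. The paper keeps $\int|\cos t|\,dt=4$ on each period to land at the sharper constant $4\e$ rather than $2\pi\e$, exactly the refinement you note at the end.
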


\begin{proof}
For each $m\in \{0,1,2,\cdots, n\}$, let $t_m=c+2m\pi$. We have
\begin{align*}
\left| \int_c^{c+2n\pi} g(t)\cos t dt\right|&=\left| \sum_{m=1}^{\eta}\int_{t_{m-1}}^{t_m} g(t)\cos t dt\right|
=\left| \sum_{m=1}^{\eta}\int_{t_{m-1}}^{t_m} [g(t)-g(t_{m-1})]\cos t dt\right| \\
&=\left| \sum_{m=1}^{\eta} \int_{t_{m-1}}^{t_m}  \left( \int_{t_{m-1}}^{t} g'(s)ds \right) \cos t dt \right| \\
&\leq \sum_{m=1}^{\eta} \int_{t_{m-1}}^{t_m} |\cos t|\left( \int_{t_{m-1}}^{t} \left|g'(s)\right|ds \right) dt \\
&\leq \sum_{m=1}^{\eta} \left(\int_{t_{m-1}}^{t_m} |\cos t| dt\right)\left( \int_{t_{m-1}}^{t_m} \e g(s) ds \right)  \\
&=\sum_{m=1}^{\eta} 4 \e \left( \int_{t_{m-1}}^{t_m} g(s) ds \right) \\
&=4 \e  \int_{c}^{c+2n\pi} g(s)ds .
\end{align*}
We complete the proof.
\end{proof}


Now, we can give the proof of Lemma \ref{lm02}.
\begin{proof} {\it of Lemma \ref{lm02}}

First, using $\cos^2 (\theta) = (1+\cos(2\theta))/2$, we get
\begin{align}\label{opt11}
\begin{split}
\|a(t)\cos\theta(t)\|_{L^2}^2  =\frac1{2} \|a(t)\|_{L^2}^2+\frac1{2}\int_{0}^{1} a^2(t) \cos 2\theta(t) dt .
\end{split}
\end{align}
Let $s=2\theta(t)$. Then we obtain
\begin{align}\label{opt12}
\int_{0}^{1} a^2(t) \cos 2\theta(t) dt =\frac1{2}\int_{2\theta(0)}^{2\theta(1)} g(s) \cos s ds,
\end{align}
where $t(s):=\theta^{-1}(\frac{s}{2})$ and
\begin{align}\label{b12a}
 g(s):= \frac{a^2(t(s))}{\theta'(t(s))}.
\end{align}
So the derivative of $g$ is
\begin{align}\label{opt13}
g'(s)=\frac{a(t(s))a'(t(s))}{[\theta'(t(s))]^2}
-\frac{a^2(t(s))\cdot \theta''(t(s))}{2[\theta'(t(s))]^3}.
\end{align}
Hence we get
\begin{align}\label{opt14}
\left|\frac{g'(s)}{g(s)}\right|
=\left|\frac{a'(t(s))}{a(t(s))\cdot \theta'(t(s))} -\frac{\theta''(t(s))}{2[\theta'(t(s))]^2}\right|
< \frac{3}{2}\e,\qquad \forall s\in[2\theta(0), 2\theta(1)].
\end{align}
Using Lemma \ref{lm01}, we obtain
\begin{align}\label{opt15}
\begin{split}
\left| \int_{0}^{1} a^2(t) \cos 2\theta(t) dt \right|
&< 3\e\int_{2\theta(0)}^{2\theta(1)} g(s)ds
=6 \e \int_{0}^{1} a^2(t) dt.
\end{split}
\end{align}
The above estimate and \eqref{opt11} imply \eqref{opt07}.

To prove \eqref{opt08}, we represent the inner product in this inequality as follows:
\begin{align}\label{opt16}
\begin{split}
\langle a\cos\theta, \bar{a}\cos \bar{\theta}\rangle
=\frac1{2}\left[ \int_{0}^{1} a(t)\bar{a}(t)\cos (\bar{\theta}(t)+\theta(t)) dt
+\int_{0}^{1} a(t)\bar{a}(t)\cos (\bar{\theta}(t)-\theta(t)) dt\right] .
\end{split}
\end{align}
Let $s=\theta(t)-\hat{\theta}(t)$. We obtain
\begin{align}\label{opt17}
\int_{0}^{1} a(t)\bar{a}(t)\cos (\bar{\theta}(t)-\theta(t)) dt
=\int_{\bar{\theta}(0)-\theta(0)}^{\bar{\theta}(1)-\theta(1)} g(s)\cos s ds,
\end{align}
where $t(s)=(\bar{\theta}-\theta)^{-1}(s)$ and
\begin{align}\label{opt18}
 g(s)=\frac{a(t(s))\bar{a}(t(s))}{\bar{\theta}'(t(s))-\theta'(t(s))}.
\end{align}
Thus, we have
\begin{align}\label{opt19}
 \frac{d}{ds}g(s)=\frac{(\bar{\theta}'-\theta')(a\bar{a}'+a'\bar{a})-a\bar{a}(\bar{\theta}''-\theta'')}{\left(\bar{\theta}'(t(s))-\theta'(t(s))\right)^3}
\end{align}
and
\begin{align}\label{opt20}
\left|\frac{g'(s)}{g(s)}\right|
< \frac{2\e}{(1-\beta^{-1})^2},\qquad \forall s\in[\bar{\theta}(0)-\theta(0), \bar{\theta}(1)-\theta(1)].
\end{align}
Using Lemma \ref{lm01} again, we get
\begin{align}\label{opt21}
\begin{split}
\left| \int_{0}^{1} a(t)\bar{a}(t)\cos (\bar{\theta}(t)-\theta(t)) dt \right| 
<&\frac{8 \e}{(1-\beta^{-1})^2}\int_{0}^{1} a(t)\bar{a}(t)dt.
\end{split}
\end{align}
Similarly, we can show
\begin{eqnarray}\label{opt22}
\left| \int_{0}^{1} a(t)\bar{a}(t)\cos (\bar{\theta}(t)+\theta(t)) dt \right|< \frac{8 \e(\beta^2+\beta+1)}{(1+\beta)^2}\int_{0}^{1} a(t)\bar{a}(t)dt
<8\e\int_{0}^{1} a(t)\bar{a}(t)dt
\end{eqnarray}
Thus \eqref{opt08} follows by combining \eqref{opt16}, \eqref{opt21} and \eqref{opt22}.
\end{proof}

\bibliographystyle{plain}
\bibliography{EMD}

\end{document}